\documentclass{article}
\usepackage[ruled,linesnumbered]{algorithm2e}
\usepackage{dsfont}
\usepackage{graphicx}
\usepackage{booktabs}
\usepackage{ragged2e}
\usepackage{amsmath}
\usepackage{amsfonts}
\usepackage{amssymb}
\usepackage{amsthm}
\usepackage[utf8]{inputenc}
\usepackage[english]{babel}
\newtheorem{defn}{definition}
\newtheorem{theo}{Theorem}
\newtheorem{lem}{Lemma}
\pagestyle{plain}
\usepackage{geometry}
\geometry{hmargin=2cm,vmargin=1.5cm}
\date{}
\begin{document}
\title{Compact Neighborhood Index for Subgraph Queries in Massive Graphs}
\author{C. Nabti*, T. Mecharnia, S. E. Boukhetta*, H. Seba* and K. Amrouche** \\
*Universit\'{e} de Lyon, CNRS, Universit\'{e} Lyon 1\\
 LIRIS, UMR5205, F-69622 Lyon, France.\\
hamida.seba@univ-lyon1.fr\\
** Ecole Supérieure d'Informatique(ESI, Oued smar, Algérie)\\
}

\maketitle
\begin{abstract}
Subgraph queries also known as subgraph isomorphism search is a fundamental problem in querying graph-like structured data. It consists to enumerate the subgraphs of a data graph that match a query graph. This problem arises in many real-world applications related to query processing or pattern recognition such as computer vision, social network analysis, bioinformatic and big data analytic. Subgraph isomorphism search knows a lot of investigations and solutions mainly because of its importance and use but also because of its NP-completeness. Existing solutions use filtering mechanisms and optimise the order within witch the query vertices are matched on the data vertices to obtain acceptable processing times. However, existing approaches
are iterative and generate several intermediate results.
They also require that the data graph is loaded in main memory and consequently are not adapted to large graphs that do not fit into memory or are accessed by streams.  To tackle this problem, we propose a new approach based on concepts widely different from existing works. Our approach distills the semantic and topological information that surround a vertex into a simple integer. This simple vertex encoding that can be computed and updated incrementally reduces considerably intermediate results and avoid loading the entire data graph into main memory. We evaluate our approach on several  real-word datasets. The experimental results show that our approach is efficient and scalable.
\end{abstract}

\section{Introduction}

\label{Sec-Introduction}
Graphs are not a new paradigm for data representation and modeling. Their use in these domains dates  back to the birth of computer databases with, for example, the work of Bachman on the \textit{Network} database model \cite{Bachman1969}.
However, the advent of applications related to nowadays connected world with social networks, online crime detection, genome and scientific databases, etc., has brought graphs to greater prominence. This is due mainly to their adaptability to represent the linked aspect of nowadays data but also to their flexibility and scalability when dealing with the main challenge of these kind of applications : Massive data. In this context, subgraph isomorphism search is a fundamental task on which are based search and querying algorithms. Subgraph isomorphism search, also known as exact subgraph matching or subgraph queries, is the problem of enumerating all the occurrences of a query graph within a larger graph called the data graph (cf. Figure \ref{Fig-Example}).
\begin{figure*}
\centering
\includegraphics[scale=0.25]{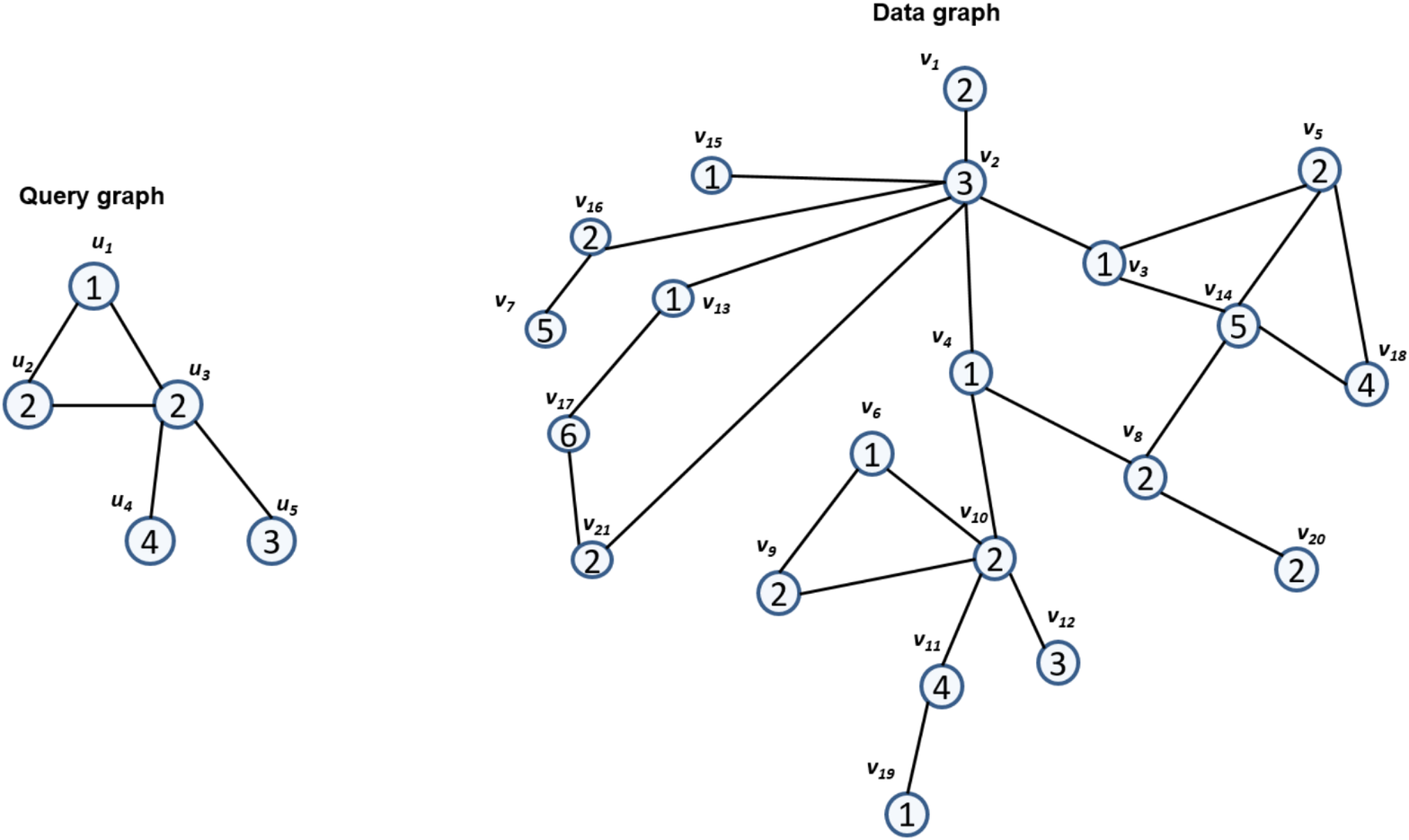}
\caption{Running Example.\label{Fig-Example}}
\end{figure*}
 Subgraph isomorphism search  is an NP-complete problem that knows extensive investigations and solutions mainly because of its importance and use, but also because known solutions are memory and space expensive and consequently do not scale well \cite{Lee2013,Gallagher2006,Shasha2002}. Most existing solutions are extensions of the well known Ullmann's algorithm \cite{Ullmann76}. These solutions are based on exploring a search space in the form of a recursion tree that maps the query vertices to the data graph's vertices. However, parsing the recursion tree is exponential in function of the number of vertices in the involved graphs. So, existing solutions never construct entirely the recursion tree and use pruning methods to have smaller search spaces.  The most referenced methods such as Ullmann's algorithm \cite{Ullmann76},  VF2 \cite{Cordella2004}, QuickSI \cite{Shang2008}, GraphQL \cite{He2008},  GADDI \cite{Zhang2009},  SPath \cite{ZhaoP2010}, Turbo$_{ISO}$ \cite{Han2013} and CFL-Match \cite{Bi2016} have different approaches to tackle this problem. These solutions are surveyed in several papers \cite{Katsarou2017,Lee2013,Gallagher2006,Shasha2002} that analyse and compare them. These comparative studies agree on the fact that there is not a method that outperforms the others for all kind of queries and all graph classes. It seems that the exiting solutions are complementary. This motivated the authors of \cite{Katsarou2017} to propose a framework that allows to run several algorithms on parallel threads on the same query.

  In this work, we analyse existing algorithms in regard of the cost of their filtering task and how this task is positioned in respect to the whole matching process. To our knowledge, this issue has not been investigated before. In fact, existing algorithms are generally built around two main tasks: filtering and searching. Filtering is fundamental as it reduces the search space explored by the searching task. Existing algorithms differ by the pruning power of the filtering mechanisms they implement but also by when these filters take place with respect to searching. Our analysis of these two points of difference highlighted four weaknesses in the state of the art algorithms that we address within the proposed framework.

 \textit{ Weakness 1:} \textit{High filtering cost}. The main pruning mechanism used by existing methods during filtering is the features of the
$k-$neighborhood of query vertices. This is the amount of information used when matching a query vertex with data vertices. The more information is used, i.e., $k$ is big, the more the pruning of the search space can be important. However, representing compactly the $k-$neighborhood for practical comparisons is a challenging issue. In fact, the representation of this information has a direct impact on its cost which increases with the value of $k$.
Besides filtering with the vertex label and the vertex degree, the lightest $k$-neighborhood filter is to consider the features of the one-hop neighborhood, i.e., $k=1$. For this, recent approaches such as Turbo$_{Iso}$ \cite{Han2013} and CFL-Match \cite{Bi2016} use the  Neighborhood Label Frequency (NLF) filter \cite{Zhu2012}. NLF ensures that a data vertex $v$ is a candidate for a query vertex $u$ only if the neighborhood of $v$, denoted $N(v)$, includes the neighborhood of $u$ (see lines 5-9 of  Algorithm \ref{Algo-NLF}).

\begin{algorithm}
\KwData{A potential candidate vertex $v$ for a query vertex $u$ }
\KwResult{TRUE if $v$ is candidate for $u$ and FALSE otherwise}
\Begin{
\If{$mnd_G(v)<mnd_Q(u)$}
{\Return (FALSE)\;}
\ForEach{ label $l \in \ell(N(u))$ }{
\If{$|\{w \in N(v) |\ell(w)=l\}|< |\{w \in N(u) |\ell(w)=l\}|$}
{\Return(FALSE)\;}
}
\Return(TRUE)\;
}
\caption{NLF and MND filters.\label{Algo-NLF}}
\end{algorithm}

However, NLF is expensive: it is $\mathcal{O}(|V(Q)||V(G)||\mathcal{L}(Q)|)$ where $|V(Q)|$ is the number of vertices of the query, $|V(G)|$ is the number of vertices in the data graph and $\mathcal{L}(Q)$ is the set of unique labels of the query graph which is $\mathcal{O}(|V(Q)|)$ in the worst case. So, to avoid applying NLF systematically on each vertex, CFL-Match \cite{Bi2016} proposes  the Maximum Neighbor-Degree (MND) filter, which can be verified in constant time for each candidate data vertex, i.e., $\mathcal{O}(|V(Q)||V(G)|)$. The maximum neighbor-degree of a vertex $u$ in a
graph $G$, denoted $mnd_G(u)$, is the maximum degree of all its neighbors \cite{Bi2016}. A data vertex $v$ is not a candidate for a query vertex $u$ if $mnd_G(v)<mnd_Q(u)$. As MND is not as powerful as NLF, the idea is to apply it before applying NLF as detailed in  Algorithm \ref{Algo-NLF} (see lines 2-3). However, MND is not always effective as we can see in the example depicted in Figure \ref{Fig-MND} where only 3 vertices are pruned with the MND filter and consequently NLF must be applied for each of the remaining vertices.

\begin{figure}
\centering
\includegraphics[scale=0.25]{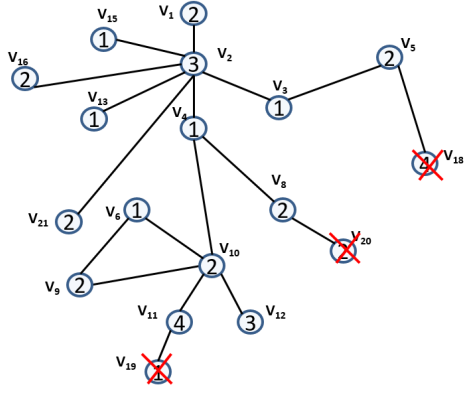}
\caption{MND Filter on on the Running Example (pruned of the vertices that do not match query labels).\label{Fig-MND}}
\end{figure}
It is also worth noting that for some neighborhood configurations filtering is useless and only the searching step is decisive. Let consider the query and data graphs depicted in Figure \ref{Fig-NLFNeedless} where all the vertices have the same label and the same degree and let consider that $k=1000$. Clearly, in this case, the 1000 comparisons required by NLF for each query vertex and each data vertex are needless. This doesn't mean that filtering is not necessary but that its cost must be reduced. Interestingly, using a less costly filtering with Ullmann's native subgraph searching subroutine outperforms the state of the art algorithms as showed by our experiments.

\begin{figure}
\centering
\includegraphics[scale=0.25]{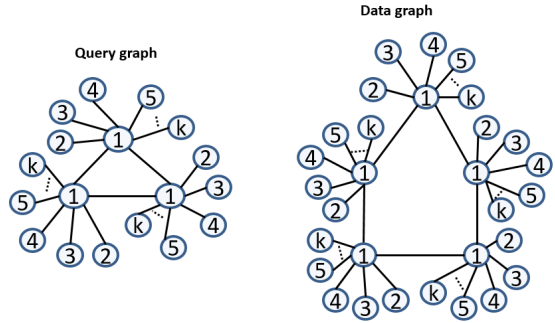}
\caption{Needless NLF filtering\label{Fig-NLFNeedless}}
\end{figure}

\textit{Weakness 2: Global filtering Vs local filtering}.
Depending on its scope, filtering can be characterised as global or local. Local filtering designates the filtering methods that reduce the number of data vertices candidates for a given query vertex, i.e., reduce the size of $C(u_i)$,  $i=1, |V_Q|$, where  $C(u_i)$ is the set of vertices of the data graph that are candidates for the query vertex $u_i$. Global filtering designates the filtering methods that can be applied on the entire search space, obtained by joining the above sets, i.e., $C(u_1)\times C(u_2)\times \cdots\times C(u_{|V_Q|})$. Our study of existing algorithms shows that local pruning is predominant.
 Some mechanisms allow global pruning but they require extra passes of the data graph to be effective. The matching order is such a mechanism. However, it is a very difficult problem to choose a robust matching order mainly because the number of all possible matching orders is exponential in the number of vertices. So, it is expensive to enumerate all of them. For example, Tuorbo$_{Iso}$ relies on vertex ordering for pruning. However, to compute this order, it needs to compute for each query vertex a selectivity criteria based on the frequency of its label in the data graph. This means, parsing the data graph for computing these frequencies.

 To deal with this problem, we introduce the Iterative Local Global Filtering mechanism (ILGF),  a simple way to achieve global punning relying on local pruning filters.

 \textit{Weakness 3: Late filtering}.
 Our analysis of how filtering and searching  are undertaken with respect to each other in the state of the art algorithms revealed that most algorithms apply their filtering mechanisms during subgraph search. In fact, little filtering, reduced mainly to label or degree filtering, is undertaken prior to subgraph search. This means that, the first cartesian products involved in the subgraph search task are costly. To tackle this, CFL-Match \cite{Bi2016} applies the MND-NLF filter, locally,  prior to subgraph search. However, as we can see in Figure \ref{Fig-NLFOrder}, the amount of achieved pruning depends on the order within which vertices are parsed. In our example, if $v_2$ is processed before $v_{16}$ the amount of pruning is less than the one obtained with the reverse order.
 \begin{figure*}
\centering
\includegraphics[scale=0.23]{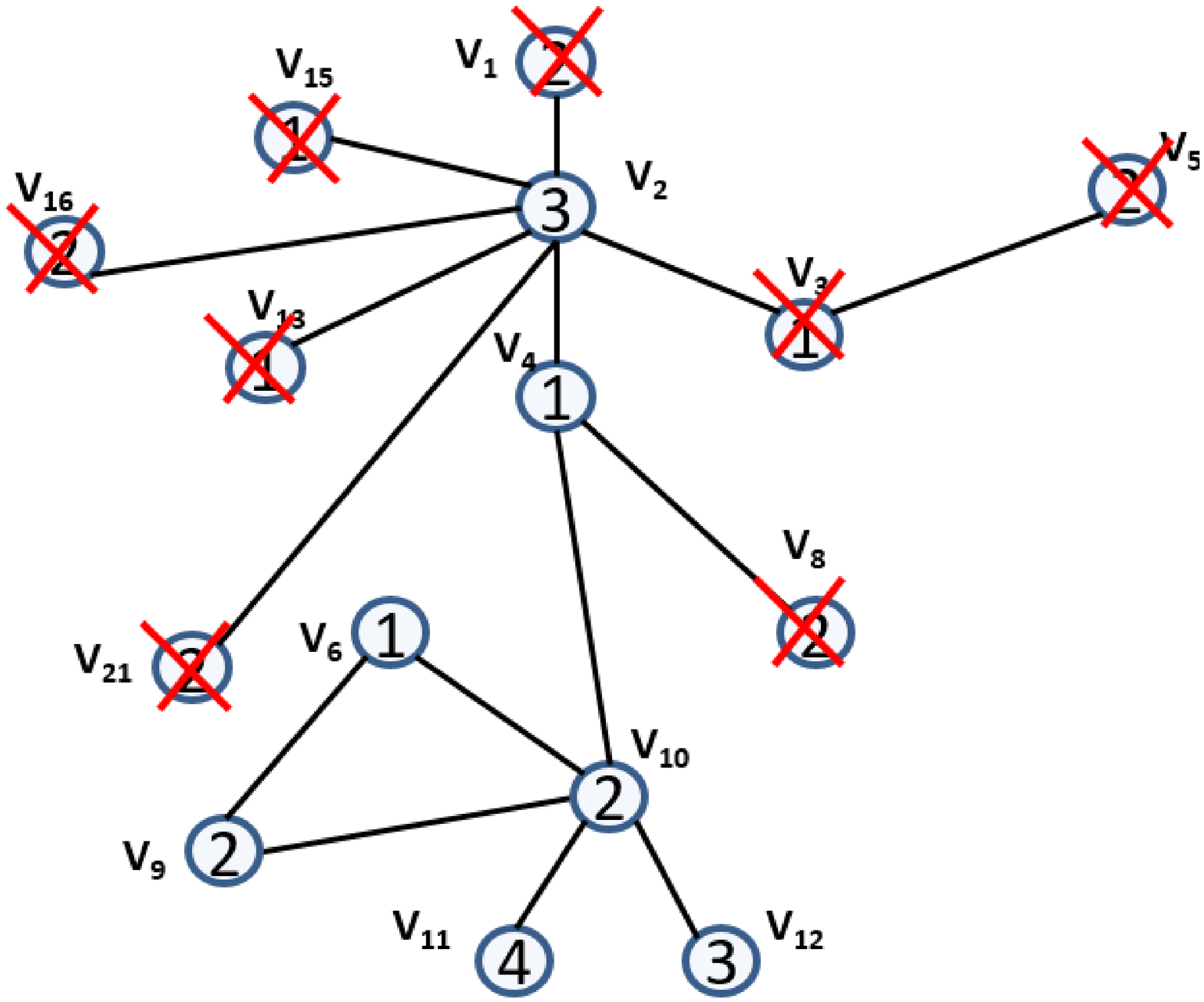}\hspace{4cm}
\includegraphics[scale=0.23]{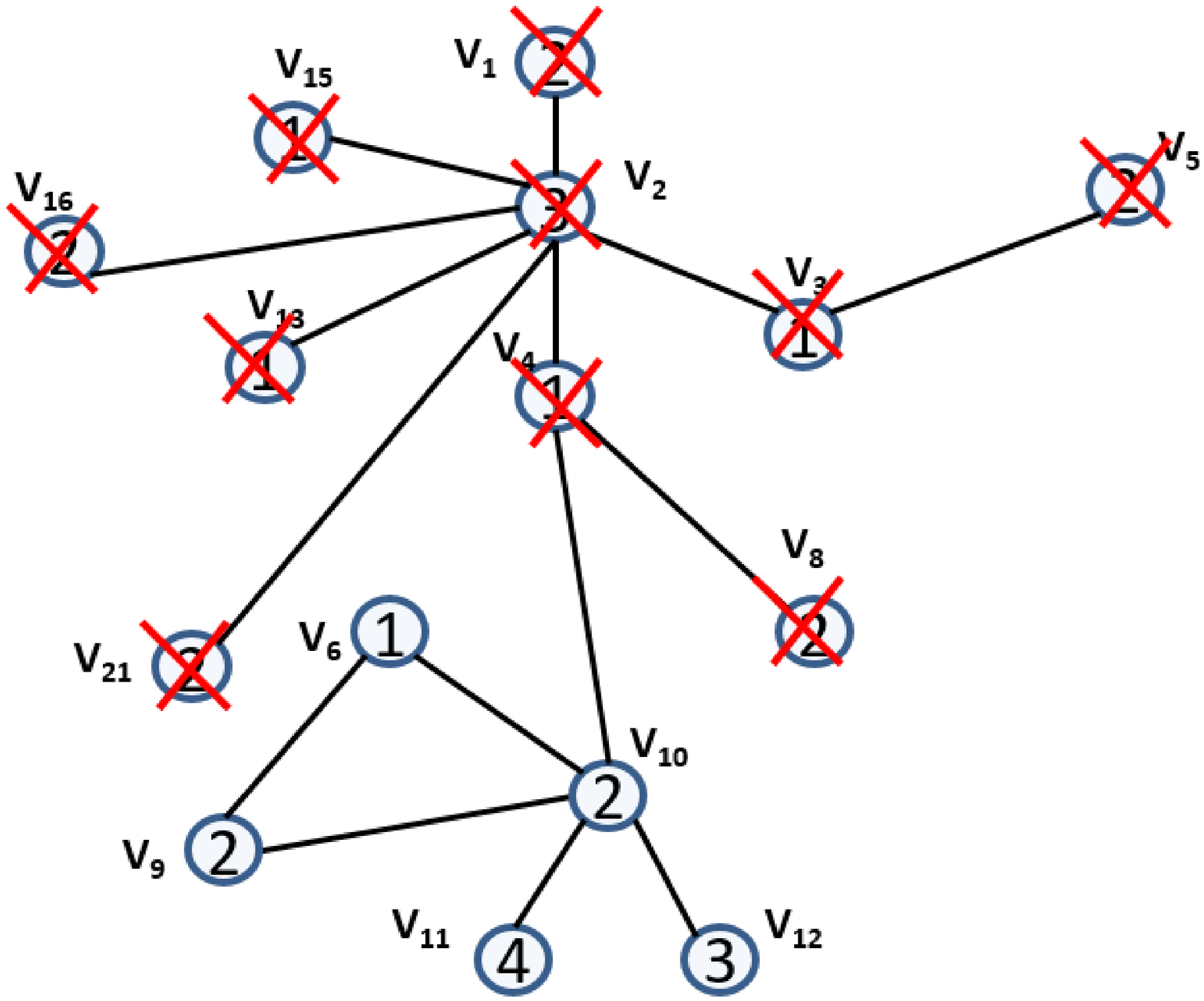}\\

$v_2$ is processed before $v_{16}$ \hspace{4cm}  $v_{16}$ is processed before $v_2$
\caption{NLF filtering with two different vertex parsing orders\label{Fig-NLFOrder}}
\end{figure*}
To get caught up, existing solutions rely on additional mechanisms and data structures during subgraph search such as \textrm{NEC tree} in Turbo$_{Iso}$ \cite{Han2013} and \textit{CPI} in CFL-Match \cite{Bi2016} that both use path-based ordering during subgraph search. However, the underlying data structures are time and space exponential \cite{Bi2016}. To avoid constructing and maintaining such data structures, we propose to achieve filtering solely prior to subgraph search. Our experiments show that this approach is more efficient than the state of the art algorithms.

\textit{Weakness 4: lack of scalability.} This drawback results directly from the three above weaknesses. In fact, the lack of global filtering and the necessity to keep the data graph into memory for several passes make these backtracking-based solutions  not suitable for graphs that do not fit into main memory. In this work, we aim to achieve a single parse of the data graph and reduce as early as possible the search space. Our contributions are:
\begin{itemize}
\item We propose a novel encoding of vertices, called Compact Neighborhood Index (CNI) that distills all the information around a vertex in a single integer leading to a simple but extremely efficient filtering scheme for processing subgraph isomorphism search. The whole filtering process is based on integer comparisons. CNIs are also easily updatable during filtering.
\item We propose an Iterative Local Global filtering algorithm (ILGF) that relies on the characteristics of CNIs to ensure a global pruning of the search space before subgraph search.
\item Our encoding mechanism has the advantage to adapt to all graph access models: main memory, external memory and streams. By performing one sequential pass of
the disk file (or the stream of edges) of the input graph. This avoids expensive random disk accesses if the graph does not fit into main memory.
\item We conduct extensive experiments using synthetic and real datasets in different application domains to attest the effectiveness and efficiency of the proposed scheme.
\end{itemize}

The rest of this paper is organized as follows. Section \ref{Sec-Background} first
formalizes the problem of subgraph isomorphism search and defines the notations used throughout the paper, then, it discusses related work to motivate our contribution. In Section
\ref{Sec-Approach}, we introduce our main contribution: the compact neighborhood index and how it is used to solve subgraph isomorphism search. Section \ref{Sec-Evaluation} presents a comprehensive experimental study on several datasets. Section \ref{Sec-Conclusion} concludes the paper.

\section{Background}
\label{Sec-Background}
\subsection{Problem Definition}
We use a data graph to represent objects and their relationships using vertices and edges. More formally,
a data graph $G$ is a 4-tuple $G =(V(G), E(G),\ell, \Sigma)$, where $V(G)$ is a set of vertices (also called nodes), $E(G)\subseteq V(G)\times V(G)$ is a set of edges connecting the vertices, $\ell: V(G) \cup E(G)\rightarrow \Sigma$ is a labeling function on the vertices and the edges  where $\Sigma$ is the set of labels that can appear on the vertices and/or the edges.
We use $|V(G)|$ and $|E(G)|$ to represent respectively the number of vertices and the number of edges in $G$.

An undirected edge between vertices $u$ and $v$ is denoted indifferently by $(u, v)$ or $(v, u)$. A neighbor of a vertex $v$ is a vertex adjacent to $v$. The degree of a vertex $v$, denoted $deg(v)$, is the number of its neighbors. We also use $deg_S(v)$ to denote the number of neighbors of $v$ that have a label in the set $S$. We use $N(v)$ to represent the neighbors of vertex $v$.
$\ell_G(u)$ (or simply $\ell(u)$ when there is no ambiguity) represents the label of vertex $u$ in $G$ and  $\ell((u,v))$ is the
label of the edge $(u,v)$ in $G$.

A graph that is contained in another graph is called a subgraph and can be defined as follows:
\begin{defn}
A graph $G_1 = (V(G_1),E(G_1), \ell_1, \Sigma)$ is a subgraph of a graph $G_2 = (V(G_2),E(G_2), \ell_2, \Sigma)$ if  $V(G_1)\subseteq V(G_2)$, $E(G_1)\subseteq E(G_2)$,
$\ell_{1}(x)=\ell_{2}(x)$  $\forall x \in V(G_1)$, and $\ell_{1}(e)=\ell_{2}(e)$  $\forall e \in E(G_1)$.
\end{defn}

\begin{defn}
A graph $Q = (V(Q),E(Q), \ell, \Sigma)$ is subgraph isomorphic to a graph $G = (V(G),E(G), \ell, \Sigma)$ if and only if there
exists an injective mapping $h$ from $V(Q)$ to $V(G)$ such that :
\begin{enumerate}
\item $\forall x \in V(Q): \ell(x) = \ell(h(x))$
\item $\forall(x,y)\in E(Q): (h(x), h(y))\in E(G)$
\end{enumerate}
\end{defn}

For presentation convenience, we do not show edge labels on our examples but these labels are considered in our algorithms and datasets.
Table \ref{Table-notation} summarises our notation.
\begin{table}[!h]
    \centering
    \caption{Notation}
    \label{Table-notation}
\begin{tabular}{ll}
\toprule
\textbf{Symbol} & \textbf{Description} \\
\midrule
$G = (V,E,\ell,\Sigma)$& undirected vertex and edge labeled graph\\
&$\ell$ is a labeling function \\
&$\Sigma$ is the set of labels\\
$V(G)$ & vertex set of the graph $G$ \\
$E(G)$ & edge set of the graph $G$ \\
$deg(v)$& degree of vertex $v$ in $G$\\
$deg_S(v)$& number of neighbors of $v$\\
& that have a label in $S$\\
$G[X]$ & the  subgraph of $G$ induced by the set  \\
& of vertices $X$\\
$\mathcal{L}(Q)$& the set of unique labels in the query $Q$\\
$cni(v)$& compact neighborhood index of $v$\\
\bottomrule

\end{tabular}
 \end{table}
\subsection{Related Work}
\label{Sec-RelatedWork}
Many algorithms are proposed to solve the subgraph isomorphism search problem. We can cite without being exhaustive Ullmann's algorithm \cite{Ullmann76}, VF2 \cite{Cordella2004}, QuickSI \cite{Shang2008}, GraphQL \cite{He2008},  GADDI \cite{Zhang2009},  SPath \cite{ZhaoP2010}, Turbo$_{ISO}$ \cite{Han2013} and CFL-Match \cite{Bi2016}. One can find in  \cite{Lee2013} and \cite{Katsarou2017} useful studies that survey and compare most of these methods on several aspects of query processing. These surveys mainly show that none of existing methods is adapted for all kind of queries and graphs and that existing algorithms are somewhat complementary. So, in this section, we review the existing solutions on other facets that outline and justify our contributions. Existing algorithms for subgraph isomorphism search are built onto two basic tasks: Filtering and Searching. Filtering is an important step and determines the efficiency of the algorithm. The searching step is generally based on the Ullmann's backtracking subroutine \cite{Ullmann76} that searches in a depth-first manner for matchings between the query graph and the data graph obtained by the filtering step.  So, the aim of the filtering step is to reduce the search space on which the searching step operates.
Filtering mechanisms can be classified into two categories depending on their scope: local or global. A local filtering mechanism prunes the set of mappings that are candidates for a single vertex. A global pruning operates on the whole search space.
\begin{itemize}
\item Global pruning: as global pruning mechanisms, we can cite vertex ordering and query rewriting. Vertex ordering is the selection of an order within which the vertices of the query are handled. In fact, this order has a direct incidence on the size of the search space as demonstrated by several examples \cite{Lee2013}. Query rewriting consists in representing the query in a form that simplifies its matching.
Ullmann's algorithm and SPath do not define any global pruning mechanism and picks the query vertices in a random manner. VF2 and GADDI handle a query vertex only if it is connected to an already matched vertex. However, GADDI uses an additional mechanism : a distance based on the number of frequent substructures between the $k-$neighborhoods of two vertices as a mean to prune globally the search space after each established mappings between a query vertex and a data vertex. QuickSI rewrites the query in the form of a tree: a spanning tree of the query. Edges and vertices of the query are weighted by the frequency of their occurrence in the data graph. Based on these weights, a minimum spanning tree is constructed and used to search the data graph.  GraphQL selects the vertex that minimises the cost of the ongoing join operation. The cost of a join is estimated by the size of the product of the involved sets of vertices.
 Turbo$_{ISO}$ uses the ordering introduced in  \cite{Zhao2012}. This ordering uses the popularity of query vertices in the data graph. Every query vertex $u$ is ranked by $rank(u) = \frac{freq(G,\ell(u))}{deg(u)}$  where $freq(G, l)$ is the number of data vertices in $G$ that have label $l$.  This ranking function favors lower frequencies and higher degrees which will minimize the number of candidate vertices in the data graph. Furthermore, Turbo$_{ISO}$ rewrites the query within a tree using this ranking as in QuickSI but Turbo$_{ISO}$ aggregates the vertices that have the same labels and the same neighbors into a single vertex. This aggregation has been extended to data graphs in \cite{Ren2015}. A similar, but more general, compressing-based approach, called Sum$_{ISO}$ \cite{Lagraa2014,Nabti2017} uses modular decomposition of graphs \cite{Gallai1967} to aggregate vertices that have the same neighborhood into supervertices and undertakes subgraph search on the compressed graphs.
    More recently, the authors of \cite{Bi2016} claim that spanning trees are not the best solutions to represent queries. They show that the edges not included in the spanning tree may have an important pruning power. So, they propose to enhance the tree representation by partitioning the query graph into a core and a forest.

\item Local pruning consists mainly in reducing the number of mappings available for each query vertex. In fact, the final search space is the result of joining the sets of available  mappings of each query vertex. Thus, given a query graph $Q=(V(Q),E(Q),\ell,\Sigma)$ and a data graph $G=(V(G),E(G),\ell,\Sigma)$, the aim is to reduce as much as possible the sets $C(u_i)$,  $i=1, |V_Q|$, where  $C(u_i)$ is the set of vertices of the data graph that match the query vertex $u_i$. The final search space is  obtained by joining these sets, i.e., $C(u_1)\times C(u_2)\times \cdots\times C(u_{|V_Q|})$ \cite{He2008}.
 The reduction of $C(u)$ is generally achieved using the neighborhood information of $u$. The amount of the obtained pruning depends on the scope of the considered neighborhood. The simplest solution considers the one-hop neighborhood such as the degree of the vertex and/or the labels of the neighbors. Neighborhood at $k-$hops is also used in some methods.
 Ullmann's Algorithm refines $C_u$ by removing the vertices that have a smaller degree than $u$. GraphQL also uses the direct neighborhood by encoding within a sequence the labels of the neighbors of each vertex. Furthermore, GraphQL uses an approximation algorithm proposed in \cite{He2006} to further reduce the search space by discarding the data vertices that are not compatible with the query vertex using the $k-$neighborhood around $u$. VF2 looks to 2-hops neighborhood. SPath uses the $k$-neighborhood by maintaining for each vertex $u$ a structure that contains the labels of all vertices that are at a distance less or equal than $k$ from $u$. SPath uses its encoding of the $k$-neighborhood to remove the data vertices that have a $k$-neighborhood that does not englobe any $k$-neighborhood of query vertices.
By rewriting, the query within a tree, QuickSI and Turbo$_{ISO}$ use also the $k-$neighborhood with the particularity that the neighborhood is rooted at a more pruning vertex. The tree representation of Turbo$_{ISO}$ is also more compact as it aggregates similar vertices.
\end{itemize}

From the above discussion of existing methods, one can see that the main concern is reducing the search space. The pillar of such quest is the amount of semantic and topological information we maintain for each vertex and how this information is encoded. The more information we use the more pruning we achieve. However, the encoding of this information has a direct impact of its usefulness in pruning. Moreover, existing methods use static encoding of the vertex neighborhood in that it is not updatable after a local pruning. In this paper, we focus on simplifying the encoding of the $k$-neighborhood so that : (1) to reduce its cost for filtering and (2) to be able to simply update it after each local pruning to ensure a global pruning of the search space as early as possible without the need of complex data structures.

\section{A Novel Approach}
\label{Sec-Approach}
In this  paper, we propose a novel approach to subgraph isomorphism search that aims to reduce the cost of the filtering step. The approach is also adapted for all access methods and especially for big graphs that are accessed within a stream or in external memory.
The main task of the proposed framework is a filtering step that relies on integer comparisons. This step is followed by Ullmann's matching subroutine. The efficiency of the filtering step relies on a novel method to encode a vertex. This encoding distills all the neighboring information that characterise a vertex into a single integer. Unlike existing methods that statically and invariably encode neighboring information, our vertex encoding integer can be dynamically updated leading to an iterative filtering process that allows a global pruning of the search space without additional data structures.

In the following, we first describe this encoding method, called Compact Neighborhood Index, then, we describe the filtering and matching steps of our subgraph matching framework.
\subsection{Compact Neighborhood Index (CNI)}
\label{sec-cni}
In our method the high-level idea is to put into a simple integer the neighborhood information that characterise a vertex. Matching two vertices is then a simple comparison between integers. Given a vertex $u$, the compact neighborhood index of $u$, denoted $cni(u)$, distills the whole structure that surrounds the vertex into a single integer. It is the result of a bijective function that is applied on the vertex's  neighborhood information. This function ensures that two given vertices  $u$ and $v$ will never have the same compact neighborhood index if they have the same label and the same number of neighbors unless they are isomorphic at one-hop.
To compute CNIs, we use pairing functions. A pairing function  on a set $A$ associates each pair of members from $A$ with a single member of $A$, so that any two distinct pairs are associated with two distinct members \cite{Hopcroft2007}. It is a bijection \cite{Stein2013} and according to Fueter–P\'{o}lya theorem \cite{Fueter1923}, the only quadratic pairing function is the Cantor polynomial
    $f: \mathbb{N }\times \mathbb{N} \to \mathbb{N} $ \\
defined by
$f(k_{1},k_{2})={\frac {1}{2}}(k_{1}+k_{2})(k_{1}+k_{2}+1)+k_{2}.$
It assigns consecutive numbers to points along diagonals in the plane.
 To pair more than two numbers, pairings of pairings can be used. For example $f(i,j,k)$ can be defined as $f(i,f(j,k))$ or $f(f(i,j),k)$, but $f(i,j,k,l)$ is defined as $f(f(i,j),f(k,l))$ to minimize the size of the produced number. So, by composing $k-1$ times the bijection of $\mathbb{N}^2$ on $\mathbb{N}$, we  obtains a bijection of $\mathbb{N}^k$ on $\mathbb{N}$ which is a polynomial of degree $k$ \cite{Lisi2007}.
 The expression of the resulting polynomial are obtained thanks to the properties of the Pascal triangle \cite{Hopcroft2007}.

 We use such a generalised pairing function to encode the neighborhood of each vertex.
Let $x_1,x_2,x_3,\cdots, x_k$ be the list of  $u$'s neighbors' labels. The compact neighborhood index of $u$  in the graph $G$ is given by:\\
  $cni(u)=\hbar(1,x_1)+\hbar(2,x_1+x_2)+\cdots+\hbar(k,x_1+x_2+x_3+\cdots+ x_k).$

  So,
  $cni(u)=\sum\mbox{\ensuremath{_{j=1}^{k}\hbar(j,x_1+...+x_j)}}$  where $\hbar(p,s)=\binom{s+p-1}{p}=\frac{(s+p-1)!}{p!(s-1)!}$\\

  \begin{theo} \label{Theo-bijection}
 $\forall (x_1,x_2,x_3,\cdots, x_k)\in \mathds{N}^k$ and $k>0$, $g_{k}$ is a bijective function from $\mathds{N}^{k}$ in $\mathds{N}$, where:\\
  $$g_{k}(x_1,x_2,x_3,\cdots,x_k)=\sum\mbox{\ensuremath{_{j=1}^{k}\hbar(j,x_1+...+x_j)}}$$
  and $$\hbar(p,s)=\binom{s+p-1}{p}=\frac{(s+p-1)!}{p!(s-1)!}$$
\end{theo}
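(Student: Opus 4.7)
The plan is to factor $g_k$ through the classical combinatorial number system. Let $s_j = x_1 + \cdots + x_j$ and $c_j = s_j + j - 1$. Since each $x_j \in \mathds{N}$, the consecutive differences $c_j - c_{j-1} = x_j + 1$ are at least $1$, so $0 \leq c_1 < c_2 < \cdots < c_k$, and the correspondence $(x_1, \ldots, x_k) \mapsto (c_1, \ldots, c_k)$ is a bijection from $\mathds{N}^k$ onto the set $\mathcal{C}_k$ of strictly increasing $k$-tuples of non-negative integers (invert via $x_1 = c_1$ and $x_j = c_j - c_{j-1} - 1$ for $j \geq 2$). Under this substitution, $\hbar(j, s_j) = \binom{s_j + j - 1}{j} = \binom{c_j}{j}$, so
\[ g_k(x_1, \ldots, x_k) \;=\; \sum_{j=1}^{k} \binom{c_j}{j}. \]

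It therefore suffices to prove the combinatorial number system statement: the map $\Phi_k : \mathcal{C}_k \to \mathds{N}$ defined by $(c_1, \ldots, c_k) \mapsto \sum_{j=1}^k \binom{c_j}{j}$ is bijective. I would argue by induction on $k$. The base case $k = 1$ is immediate since $\Phi_1(c_1) = c_1$. For the inductive step, the central lemma is
\[ \sum_{j=1}^{k-1} \binom{c_j}{j} \;<\; \binom{c_k}{k-1} \quad \text{whenever } 0 \leq c_1 < \cdots < c_{k-1} < c_k, \]
which follows from Pascal's rule $\binom{c_k+1}{k} - \binom{c_k}{k} = \binom{c_k}{k-1}$ applied inductively. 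Granted this lemma, any $N \in \mathds{N}$ determines its top index $c_k$ uniquely as the largest integer satisfying $\binom{c_k}{k} \leq N$, and the induction hypothesis decodes the residue $N - \binom{c_k}{k} < \binom{c_k}{k-1}$ into the remaining entries $c_1, \ldots, c_{k-1}$ with $c_{k-1} < c_k$, giving both existence and uniqueness.

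The main obstacle is the central lemma; aside from it, the greedy decoding is mechanical. I would handle it by strengthening the induction hypothesis to a bijection between the finite sets $\{(c_1, \ldots, c_{k-1}) \in \mathcal{C}_{k-1} : c_{k-1} \leq M\}$ and $\{0, 1, \ldots, \binom{M+1}{k-1} - 1\}$ for every $M \geq k-2$, so that the residue range $[0, \binom{c_k}{k-1})$ matches exactly the image of those sequences with $c_{k-1} \leq c_k - 1$. Composing the resulting bijection $\Phi_k$ with the linear change of variables $(x_j) \leftrightarrow (c_j)$ then yields the desired bijection $g_k : \mathds{N}^k \to \mathds{N}$, completing the proof.
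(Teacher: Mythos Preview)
Your proof is correct, and after the change of variables $c_j = s_j + j - 1$ it is in fact running the same inductive engine as the paper's proof: your central lemma $\sum_{j=1}^{k-1}\binom{c_j}{j} < \binom{c_k}{k-1}$ is exactly the paper's Lemma~6 (the bound $g_k(x_1,\dots,x_k) < \hbar(k,s_k+1)$) rewritten in the new coordinates, and both arguments recover the top partial sum first and then recurse on $g_{k-1}$. The genuine differences are in packaging. First, you recognize $g_k$ as the ranking map of the combinatorial number system after a linear substitution, which situates the result in a standard framework rather than treating it as an ad~hoc polynomial identity. Second, and more substantively, you make surjectivity explicit: the paper proves injectivity carefully via its Lemma~7 but then dispatches surjectivity in a single sentence (``each $n$ in $\mathds{N}$ has an antecedent'') without actually exhibiting the inverse, whereas your greedy decoding together with the strengthened induction hypothesis on the bounded sets $\{c_{k-1}\le M\}\leftrightarrow\{0,\dots,\binom{M+1}{k-1}-1\}$ gives a clean two-sided bijection at every stage. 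So your route buys a cleaner surjectivity argument and a connection to a known combinatorial object; the paper's route stays closer to the original variables $x_j$ and avoids introducing the auxiliary $c_j$.
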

A proof of this Theorem is presented in Appendix \ref{app-prooftheo}.
To use this bijection on vertices' labels, we first assign a unique integer to each vertex label. This assignment can be simply achieved by numbering labels parting from $1$ or by using an associative array to store the query labels. Let $ord(\ell(u))$ be the subroutine used to retrieve the integer associated to the label of vertex $u$. $ord(\ell(u))$ will return $0$ if vertex $u$ has  a label that does not belong to $\mathcal{L}(Q)$. This will systematically prune the neighbors that do not verify the label filter and avoid to consider them in the computation of the CNI of a vertex.

In our case, $k=|\mathcal{L}(Q)|$, i.e., the number of distinct labels in the query. To compute $\hbar(j,x_1+...+x_j)$, $j$ corresponds to a query label index, and $x_j$ is the number of apparition of label $j$ in the direct neighborhood of vertex $u$. This provides to CNIs the same filtering capacity as the NLF filter.\\

\textbf{Example:\\}
Figure \ref{Figure-Example-CNI} illustrates the CNIs of the query graph of our pruning example. These CNIs are computed as follows:\\
For this query graph, the integers used to represent the labels are: 1, 2, 3, and 4, i.e.,  $k=4$.\\
$cni(u_1)=\hbar(1,0)+\hbar(2,0+2)+\hbar(3,0+2+0)+\hbar(4,0+2+0+0)=0+3+4+5=12$. In fact, we can see that labels $1$, $3$ and $4$ do not appear in the neighborhood of of $u_1$ and label $2$ appears $2$ times. The remaining CNIs are computed similarity:\\
$cni(u_2)= g_4(1,1,0,0)=\hbar(1,1)+\hbar(2,1+1)+\hbar(3,1+1+0)+\hbar(4,1+1+0+0)=13$.\\
$cni(u_3)=g_4(1,1,1,1)=\hbar(1,1)+\hbar(2,1+1)+\hbar(3,1+1+1)+\hbar(4,1+1+1+1)=49$.\\
$cni(u_4)=g_4(0,1,0,0)=\hbar(1,0)+\hbar(2,0+1)+\hbar(3,0+1+0)+\hbar(4,0+1+0+0)=3$.\\
$cni(u_5)=g_4(0,1,0,0)=\hbar(1,0)+\hbar(2,0+1)+\hbar(3,0+1+0)+\hbar(4,0+1+0+0)=3$.\\

For filtering, we rely on three filters: the label filter, the degree filter and the CNI filter. The label and degree filters are the basis of all pruning methods. The CNI filter is based on the above bijection. So, we verify candidates for query vertices by the lemmas below.
\begin{lem} [Label filter] \label{lem-label}
Given a query $Q$ and a data graph $G$, a data vertex $v \in V(G)$ is not a candidate of $u \in V(Q)$ if $\ell(v)\neq \ell(u)$.
\end{lem}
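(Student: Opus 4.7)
The plan is to invoke Definition 2 directly and argue by contraposition. By the paper's use of the term, a data vertex $v$ being a ``candidate'' of a query vertex $u$ means that there exists at least one injective mapping $h:V(Q)\to V(G)$ witnessing the subgraph isomorphism of $Q$ into $G$ with $h(u)=v$. So the strategy is to show that the existence of such an $h$ forces $\ell(v)=\ell(u)$, whence $\ell(v)\neq \ell(u)$ excludes $v$ from being a candidate.

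First I would spell out what it means for $v$ to be a candidate of $u$, namely, the existence of some injective $h:V(Q)\to V(G)$ satisfying the two conditions of Definition 2 with $h(u)=v$. Next I would apply condition (1) of that definition, which states $\ell(x)=\ell(h(x))$ for every $x\in V(Q)$, specializing to $x=u$ and substituting $h(u)=v$. This immediately yields $\ell(u)=\ell(v)$. Taking the contrapositive gives exactly the lemma.

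There is no genuine obstacle here: the lemma is an almost-tautological consequence of the label-preservation clause of the subgraph-isomorphism definition. The only minor care point is the overloaded notation ``$\ell$'' for both $\ell_Q$ and $\ell_G$; to keep the argument unambiguous I would briefly write $\ell_Q(u)=\ell_G(h(u))$ at the key step and then drop back to the paper's unsubscripted notation for the conclusion. Because the argument is this short, in the final write-up I would state it in a single sentence rather than as a multi-step derivation.
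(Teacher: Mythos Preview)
Your proposal is correct and matches the paper's treatment: the paper does not write out a proof at all but simply declares Lemmas~\ref{lem-label} and~\ref{lem-deg} ``straightforward,'' which is exactly the one-line contrapositive from condition~(1) of Definition~2 that you give. Your remark about the overloaded $\ell$ is a nice touch but not strictly needed here.
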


\begin{lem} [Degree filter]\label{lem-deg}
Given a query $Q$ and a data graph $G$, a data vertex $v \in V(G)$ is not a candidate of $u \in V(Q)$ if $deg_{\mathcal{L}(Q)}(v)<deg_{\mathcal{L}(Q)}(u)$.
\end{lem}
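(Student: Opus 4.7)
The plan is to prove the contrapositive: if $v\in V(G)$ is a candidate for $u\in V(Q)$, then $\deg_{\mathcal{L}(Q)}(v)\geq \deg_{\mathcal{L}(Q)}(u)$. By definition, $v$ being a candidate for $u$ means there exists a subgraph isomorphism $h: V(Q)\to V(G)$ with $h(u)=v$, so I will simply unpack the two requirements of Definition~2 on the neighbors of $u$ and count.

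First, I will pick an arbitrary neighbor $u'\in N(u)$ in the query. Condition (2) of the isomorphism, applied to the edge $(u,u')\in E(Q)$, yields $(v,h(u'))=(h(u),h(u'))\in E(G)$, so $h(u')$ is a neighbor of $v$ in the data graph. Condition (1) gives $\ell(h(u'))=\ell(u')$, and since $u'$ is a query vertex we have $\ell(u')\in \mathcal{L}(Q)$; therefore $h(u')$ contributes to $\deg_{\mathcal{L}(Q)}(v)$. Because $h$ is injective, distinct neighbors of $u$ are sent to distinct neighbors of $v$, so the map $u'\mapsto h(u')$ embeds $N(u)$ injectively into the set of neighbors of $v$ that carry a label in $\mathcal{L}(Q)$.

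Next I will observe the elementary fact that every neighbor of $u$ in $Q$ has its label in $\mathcal{L}(Q)$ by definition of $\mathcal{L}(Q)$, so on the query side $|N(u)|=\deg(u)=\deg_{\mathcal{L}(Q)}(u)$. Combining this equality with the injection constructed above gives
\[
\deg_{\mathcal{L}(Q)}(u)\;=\;|N(u)|\;=\;|h(N(u))|\;\leq\;\deg_{\mathcal{L}(Q)}(v),
\]
which is exactly the contrapositive of the lemma.

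I do not expect any real obstacle: this is a direct bookkeeping argument relying only on the two defining clauses of subgraph isomorphism and the injectivity of $h$. The only subtle point worth a single justifying line is the identification $\deg(u)=\deg_{\mathcal{L}(Q)}(u)$ on the query side, which is what forces the restriction to labels in $\mathcal{L}(Q)$ to be sound rather than an unjustified weakening.
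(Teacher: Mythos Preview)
Your argument is correct and is exactly the straightforward unpacking of Definition~2 that the paper has in mind; the paper itself does not spell out a proof, stating only that Lemmas~\ref{lem-label} and~\ref{lem-deg} are straightforward. Your one explicit observation that $\deg_{\mathcal{L}(Q)}(u)=\deg(u)$ on the query side is the only point worth writing down, and you handled it cleanly.
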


\begin{lem} [CNI filter]\label{lem-cni}
Given a query $Q$ and a data graph $G$, a data vertex $v \in V(G)$ that verifies the label and degree filters is not a candidate of $u \in V(Q)$ if $cni(v)<cni(u)$.
\end{lem}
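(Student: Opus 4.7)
The plan is to prove the contrapositive: if $v$ is a candidate for $u$ (meaning $v$ has already passed the label and degree filters and could potentially be matched to $u$ in a subgraph isomorphism), then $cni(v)\ge cni(u)$. By the definition of subgraph isomorphism, for $v$ to be a candidate of $u$, the multiset of neighbor labels of $u$ must be contained in the multiset of neighbor labels of $v$. Writing $x_j$ for the number of neighbors of $u$ carrying the $j$-th query label, and $y_j$ for the same count around $v$, this containment is exactly $y_j\ge x_j$ for every $j\in\{1,\dots,k\}$, where $k=|\mathcal{L}(Q)|$. This is the well-known NLF condition already recalled in the paper.

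Next I would establish a simple monotonicity lemma for the building block $\hbar(p,s)=\binom{s+p-1}{p}$: for fixed $p\ge 1$, $\hbar(p,\cdot)$ is non-decreasing on $\mathbb{N}$. This is immediate from the hockey-stick identity $\binom{s+p-1}{p}=\sum_{t=0}^{s-1}\binom{t+p-1}{p-1}$, since each summand is non-negative, so increasing $s$ only appends non-negative terms. (Strict monotonicity for $s\ge 1$ follows as well, but non-strict is all that is needed here.)

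With this lemma in hand, the proof is a term-by-term comparison. Define the prefix sums $X_j=x_1+\cdots+x_j$ and $Y_j=y_1+\cdots+y_j$. From $y_j\ge x_j$ for every $j$, we obtain $Y_j\ge X_j$ for every $j\in\{1,\dots,k\}$. Applying the monotonicity of $\hbar(j,\cdot)$ coordinate-wise gives
\[
\hbar(j,Y_j)\;\ge\;\hbar(j,X_j)\qquad\text{for every }j=1,\dots,k.
\]
Summing over $j$ and using the explicit formula for the CNI yields
\[
cni(v)=\sum_{j=1}^{k}\hbar(j,Y_j)\;\ge\;\sum_{j=1}^{k}\hbar(j,X_j)=cni(u),
\]
which is the desired inequality. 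Contraposing, $cni(v)<cni(u)$ forces some coordinate $y_j$ to fall below $x_j$, hence the NLF condition fails and $v$ cannot be a candidate of $u$.

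I do not expect any real obstacle: the entire argument reduces to the trivial monotonicity of binomial coefficients in their upper index, combined with the fact that the $j$-th summand of a CNI depends only on the prefix sum $x_1+\cdots+x_j$, which is monotone in each coordinate. The only thing worth being explicit about is the interpretation of ``candidate'', which I would clarify up front by invoking the containment of neighbor-label multisets implied by Definition~2, so that the reader sees the reduction to a pointwise comparison $y_j\ge x_j$ before the monotonicity argument is applied.
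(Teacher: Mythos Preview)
Your argument is correct. It differs from the paper's own proof sketch in its decomposition: the paper argues by contradiction, takes $k=\deg(u)$, lists the individual neighbor labels of $u$ as $(l_1,\dots,l_k)$ and those of $v$ as $(l_1,\dots,l_{k+t})$, and writes $cni(v)=cni(u)+\hbar(k{+}1,\cdot)+\cdots+\hbar(k{+}t,\cdot)$, i.e.\ it peels off the \emph{extra} $\hbar$-terms contributed by the $t$ additional neighbors of $v$. You instead keep the same number $k=|\mathcal{L}(Q)|$ of summands on both sides and compare them term-by-term via the monotonicity of $\hbar(j,\cdot)$ applied to the prefix sums of the label \emph{counts}. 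Your route matches the operational definition of the CNI used in the examples (where $x_j$ is a label multiplicity, not a raw neighbor label) and transparently covers the boundary case $\deg(v)=\deg(u)$, which the paper's sketch sidesteps by assuming $t\ge 1$. The paper's decomposition is slightly shorter once one accepts that the first $k$ labels of $v$ coincide with those of $u$; your monotonicity argument is more robust and requires no such alignment of label lists.
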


Lemmas \ref{lem-label} and \ref{lem-deg} are straightforward. The proof of Lemma \ref{lem-cni} is given in Appendix \ref{app-prooflemcni}.
\begin{figure}
\centering
\includegraphics[scale=0.30]{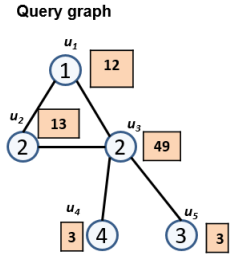}
\caption{CNIs of the Query graph.\label{Figure-Example-CNI}}
\end{figure}

Note that, the CNI filter can be verified in constant time; that is, verifying one candidate vertex $v$
for a query vertex $u$ takes $\mathcal{O}(1)$ time versus  $\mathcal{O}(\mathcal{L}(Q))$ for NLF.

\subsection{Iterative Local Global Filtering Algorithm (ILGF)}
The aim of the Iterative Local Global Filtering Algorithm (ILGF) is to  reduce globally the search space using CNIs. It relies on the fact that $cni(v)$ can be easily updated after a local filtering giving rise to  new filtering opportunities.  Algorithm \ref{Algo_Filtering} details this iterative filtering process. To verify the CNI filter on a candidate data vertex, the algorithm uses the \textit{cniVerify()} subroutine that implements Lemma \ref{lem-cni} and consequently allows to verify that a data vertex  is a candidate for a given query vertex according to the label, degree and CNI filters defined above.
The ILGF algorithm removes iteratively from $G$ the vertices that do not match a query vertex using the label, the degree and the CNI filters (see lines 5-16) of the algorithm. Each time a vertex is removed by the filtering process the degree and CNI of its neighbors are updated (lines 11-14) giving rise to new filtering opportunities. However, it is important to note that filtering iterations do not parse all the remaining vertices in the data graph. In fact, the set \textit{nextFilter} keep track of  the neighbors of the vertices pruned during the current iteration. The next iteration parses only the vertices contained in \textit{nextFilter}. Filtering stops when no further vertices are removed, i.e., \textit{nextFilter} is empty.

\begin{algorithm}
\KwData{A data graph $G$ }
\KwResult{A filtered version of $G$}
\Begin{
stopFilter $\leftarrow$ \textit{FALSE}\;
toFilter $\leftarrow V(G)$\;
nextFilter $\leftarrow \emptyset$\;
\Repeat {stopFilter}{
\ForEach{ vertex $v \in toFilter$}{
  \If {$\forall u \in V(Q), !cniVerify(v,u)$}
         { \textit{affected} $\leftarrow N(v)$\;
           nextFilter $\leftarrow$ nextFilter $\cup N(v)$\;
           remove $v$ from $V(G)$ and the corresponding edges from $E(G)$\;
           \ForEach {$x \in$ \textit{affected} } {
                 update $deg(x)$\;
                 update $cni(x)$\;
                            }
           }
            }
  \If {$|nextFilter|=0$} {stopFilter $\leftarrow$ \textit{TRUE}\;}
  \Else {toFilter $\leftarrow$ nextFilter\;
         nextFilter $\leftarrow \emptyset$\;}
}

\ForEach{ vertex $u\in V(Q)$}{
          $C(u)\leftarrow \{v \in V(G) $ such that $ cniVerify(v, u)\}$\;
          \If {$C(u)=\emptyset$}
          {\Return ($\emptyset$)\;
         }
         }
$M \leftarrow \emptyset$\;
SubgraphSearch($M$)\;
}
\caption{ILGF.\label{Algo_Filtering}}
\end{algorithm}

\begin{algorithm}
\KwData{A data vertex $v$ and a query vertex $u$.}
\KwResult{returns true if $v$ is a candidate for $u$ according to the label, degree and CNI filters.}
\Begin{
\Return $(\ell(u)=\ell(v)\bigwedge deg_{\mathcal{L}(Q)}(u)< deg_{\mathcal{L}(Q)}(v)\bigwedge cni(u)< cni(v))$ or $(\ell(u)=\ell(v)\bigwedge deg_{\mathcal{L}(Q)}(u)= deg_{\mathcal{L}(Q)}(v)\bigwedge cni(u)=cni(v)))$
}
\caption{Function \textit{cniVerify}($v$,$u$).\label{Algo-FootMatch}}
\end{algorithm}

Figure \ref{Figure-DataGraph-Filetering} illustrates the ILGF algorithm on our example. Figure \ref{Figure-DataGraph-Filetering} (a) shows the CNIs computed for the data vertices. During degree and CNI computation, the label filter is applied and the vertices that do not verify this filter are pruned and are not considered in the degree and CNI of there neighbors. This is the case for vertices $v_{7}$, $v_{14}$ and $v_{15}$. only the data vertices that verify the label filter are considered when computing degrees and CNIs.

The first iteration, of the ILGF algorithm (see Figure \ref{Figure-DataGraph-Filetering} (b)), finds out that vertices $v_{1}$, $v_{3}$, $v_{5}$, $v_{13}$, $v_{16}$, $v_{17}$, $v_{19}$, $v_{20}$ and $v_{21}$ cannot be mapped to any query vertex because:
\begin{itemize}
\item  $v_{1}$, $v_{13}$, $v_{15}$, $v_{16}$, $v_{19}$, $v_{20}$ and $v_{21}$ do not pass the degree filter,
\item $v_{3}$ and $v_{5}$ do not pass the CNI filter.
\end{itemize}
After removing these vertices and updating the degree and CNI of their neighbors a new filtering iteration is triggered (see Figure \ref{Figure-DataGraph-Filetering} (c)). We note here that this second filtering iteration concerns only vertices $v_2$, $v_8$, $v_{11}$, and $v_{18}$ whose degrees and CNIs have been modified during the precedent filtering iteration. The second filtering iteration reveals that vertices $v_8$ and $v_{18}$ can also be pruned. They do not pass the degree filter. The resulting filtered data graph is depicted in Figure \ref{Figure-DataGraph-Filetering} (d) that also shows the new filtering opportunities triggered by the second filtering iteration. A third filtering iteration is launched on the vertices whose degrees and CNIs have been modified. Consequently, the third filtering iteration verifies only vertex $v_4$. $v_4$ does not verify the CNI filter and can be pruned leading to the graph depicted on Figure \ref{Figure-DataGraph-Filetering}(e). This figure shows also that the degree and CNI of vertices $v_2$ and $v_{10}$ are updated. Consequently these two vertices will be the target of the final filtering iteration that prunes vertex $v_2$ which no longer verifies the degree filter.  The final filtered graph is illustrated in Figure \ref{Figure-DataGraph-Filetering} (f).

\begin{figure*}
\centering
\includegraphics[scale=0.25]{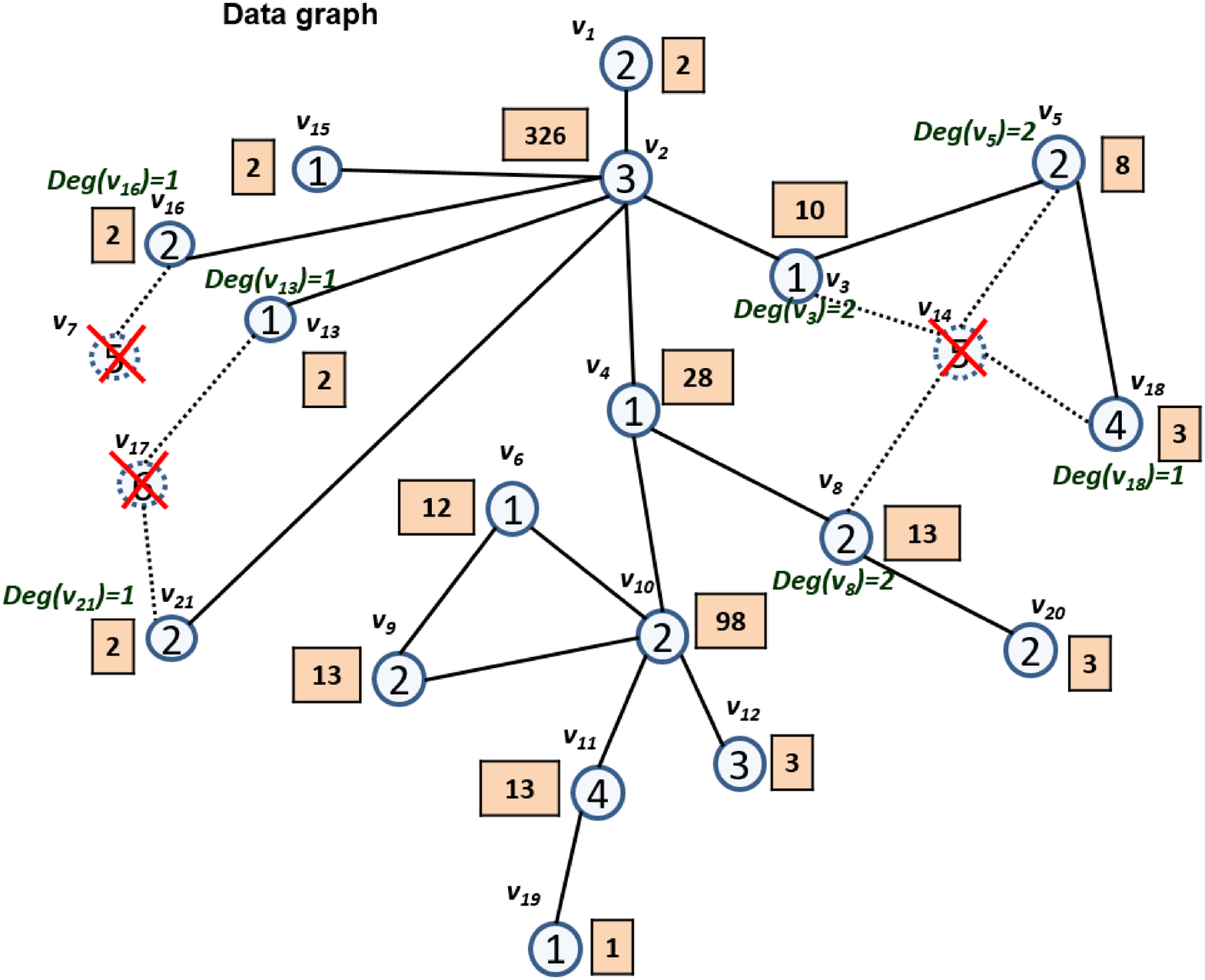}\quad
\includegraphics[scale=0.25]{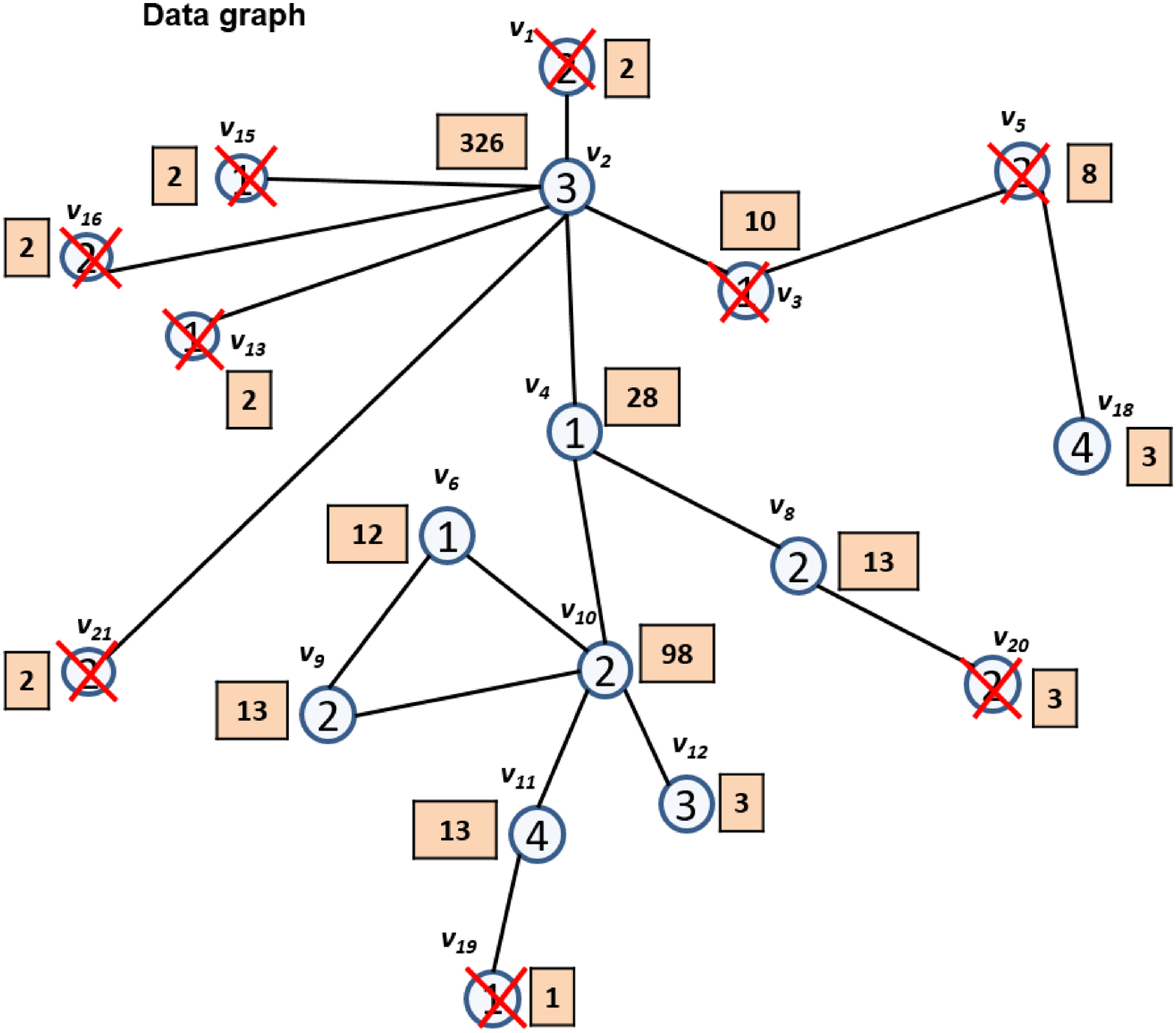}\\
(a) CNI computation and label filtering \hspace{4cm} (b) First filtering iteration\\
\includegraphics[scale=0.25]{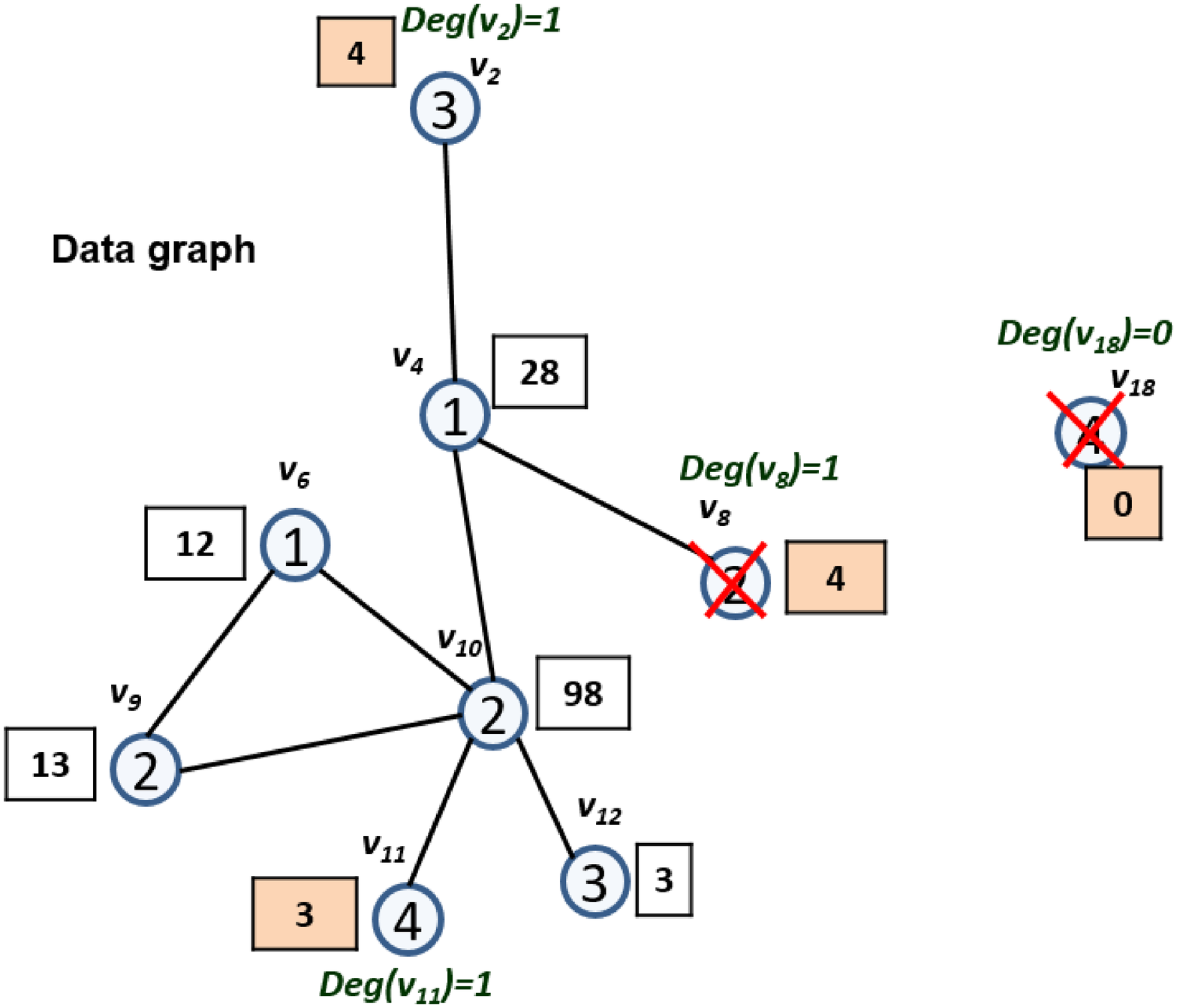}\quad \quad\quad \quad
\includegraphics[scale=0.25]{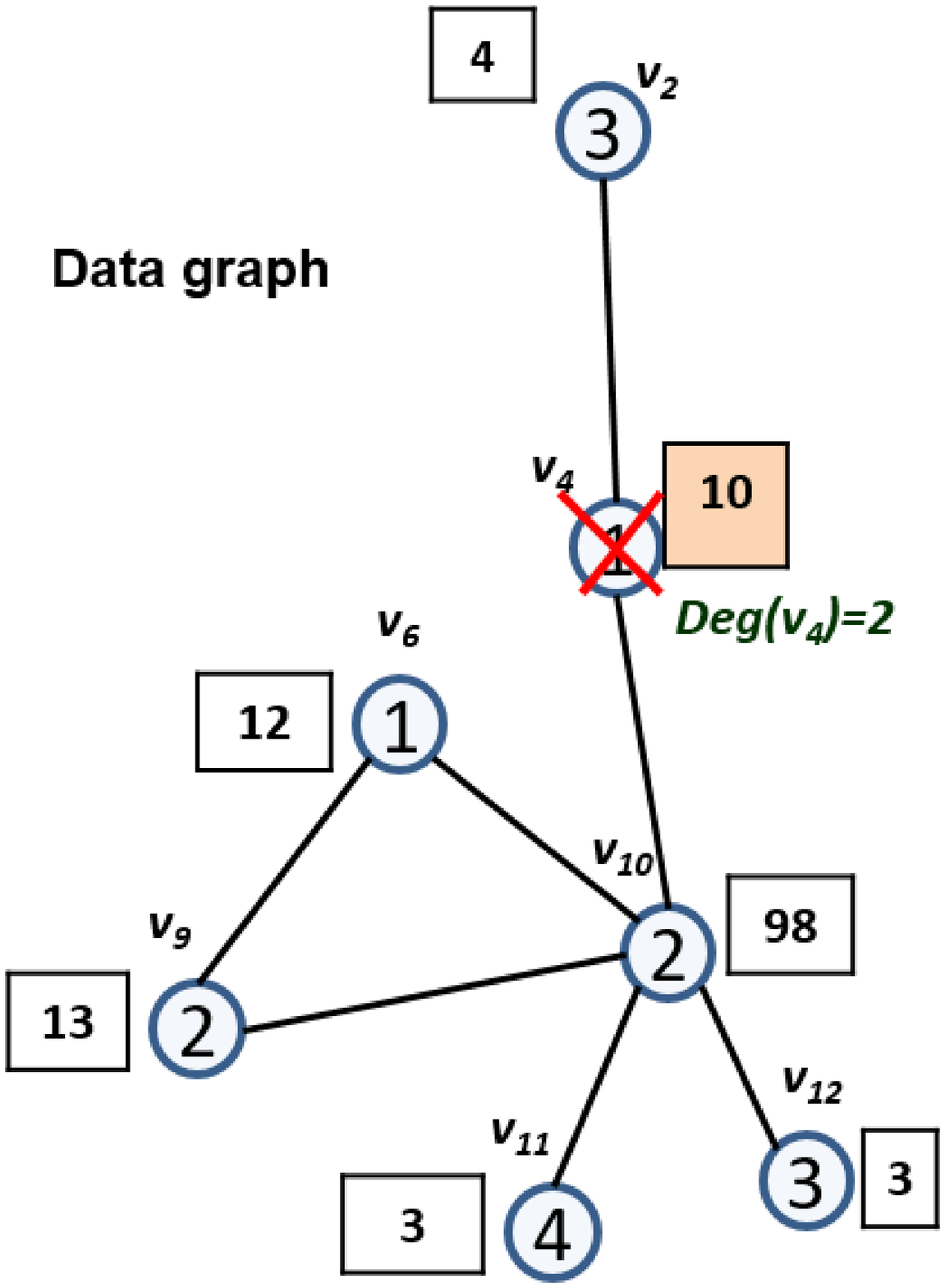}\\
(c) Second filtering iteration  \hspace{4cm} (d) Third filtering iteration\\
\includegraphics[scale=0.25]{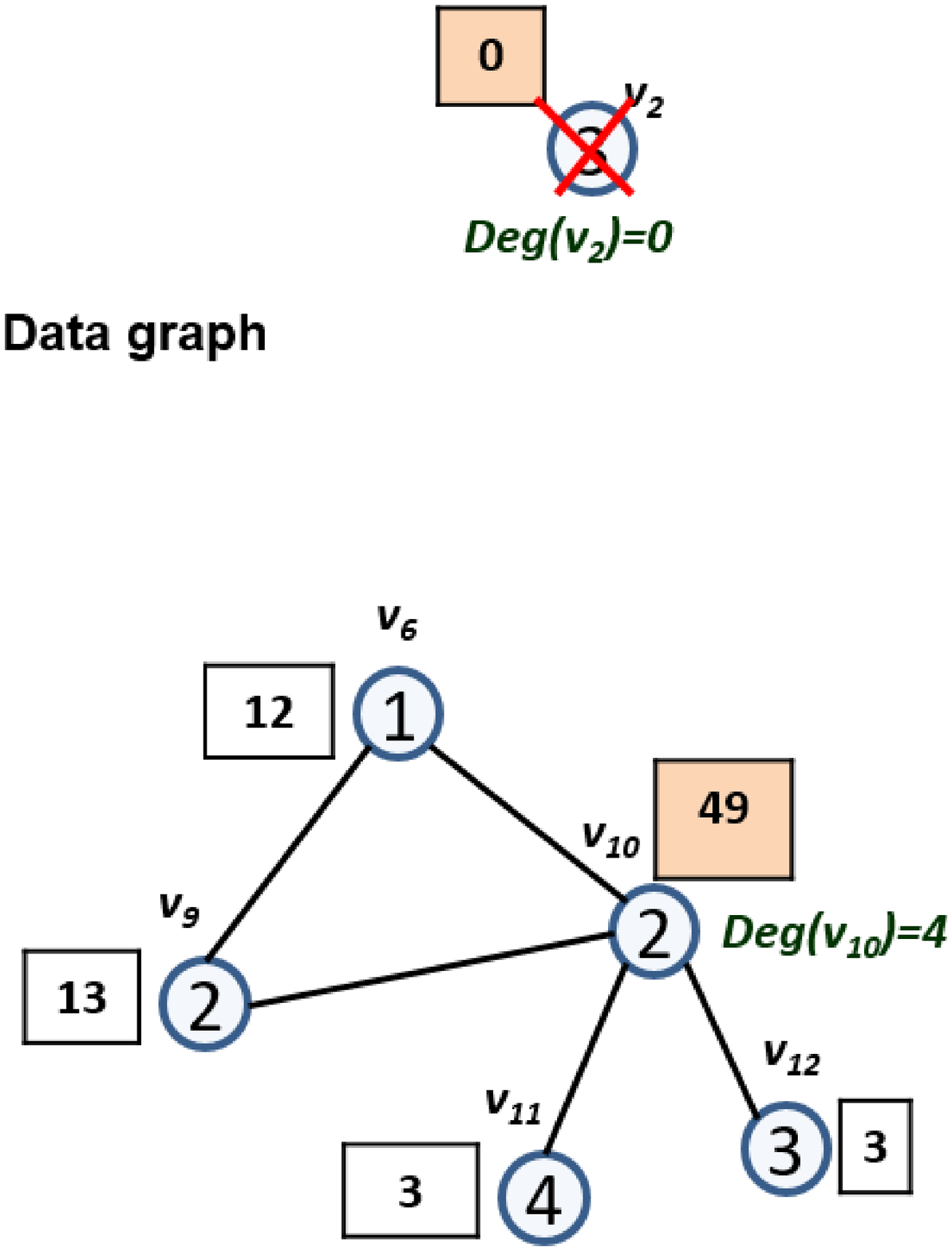}\quad \quad\quad \quad
\includegraphics[scale=0.25]{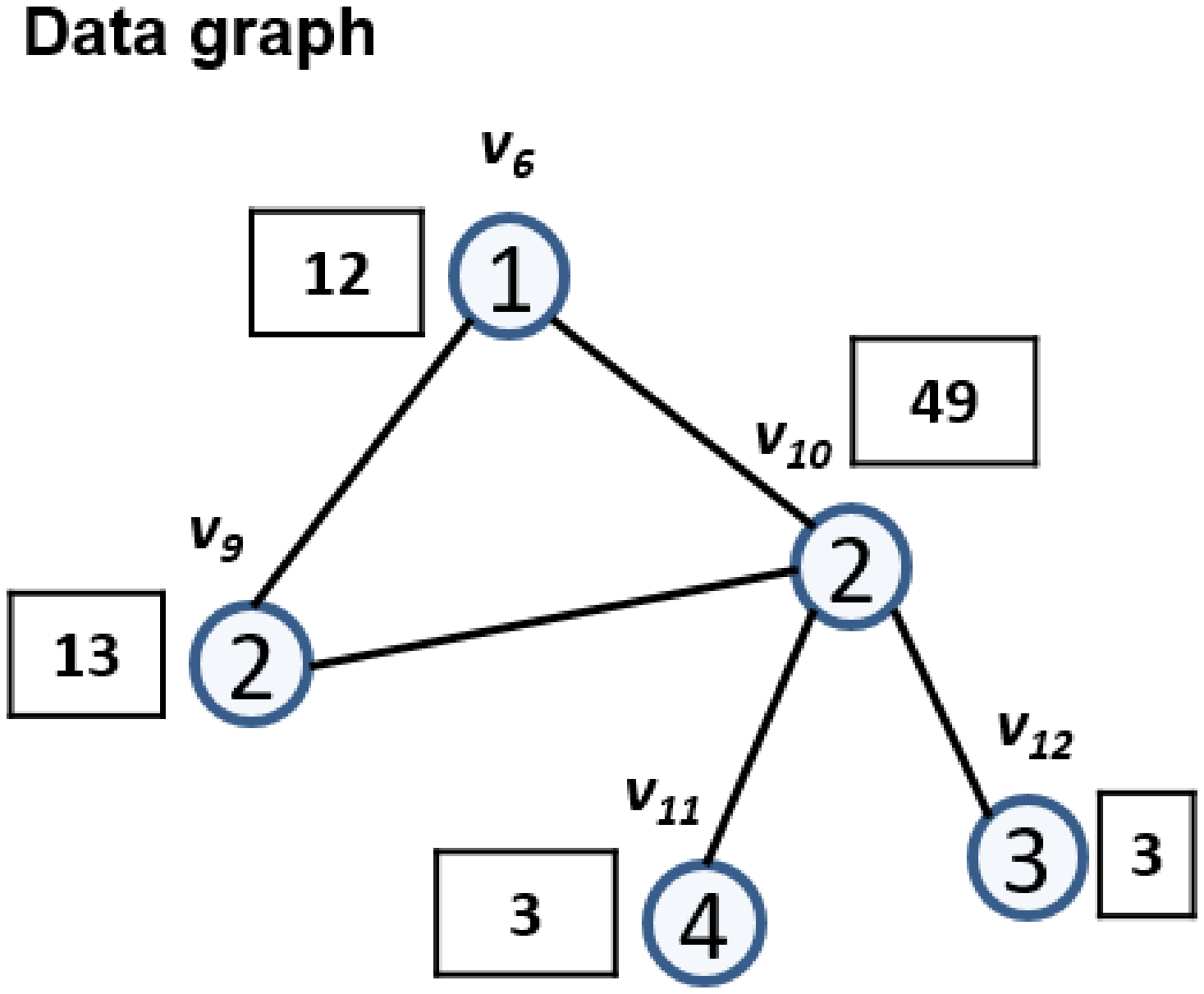}\\
(e)Fourth filtering operation\hspace{4cm} (f) Final filter data graph\\

\caption{Filtering iterations of our running example.\label{Figure-DataGraph-Filetering}}
\end{figure*}

\subsection{Subgraph Search}
After filtering, the data graph contains only the vertices that are candidates for query vertices, i.e., the vertices map at one-hop according to the CNI filter. Subgraph search allows to verify the mapping at k-hops. Algorithm \ref{Algo-SubgraphList} implements this step. It is a depth first search subroutine that parses the filtered data graph and lists the subgraphs of the filtered data graph that are isomorphic to the query by verifying the adjacency relationships. This step allows also to handle edge labels by discarding those that do not match the query labels. The subroutine \textit{neighborCheck()} verifies that a mapping $(v, u)$ is added to the current partial embedding $M$ only if $v$ and $u$ have neighbors that also map.

\begin{algorithm}
\KwData{a partial embedding $M$.}
\KwResult{All embeddings of $Q$ in $G$.}
\Begin{
\If {$|M| = |V(Q)|$}{
Report $M$\;
}
Choose a non matched vertex $u$ from $V(Q)$\;
$C(u) \leftarrow \{$ non matched $ v \in V(G)$ such that cniVerify($v, u)$)\;
\ForEach {$v \in C(u)$}{
  \If {neighborCheck($u$,$v$, $M$)} {
     $M \leftarrow M\cup \{(u,v)\}$\;
     SubgraphSearch($M$)\;
     Remove $(u,v)$ from $M$ \;
     }
}
}
\caption{SubgraphSearch.\label{Algo-SubgraphList}}
\end{algorithm}

\begin{algorithm}
\KwData{a partial embedding $M$, a query vertex $u$ and a data vertex $v$.}
\KwResult{returns true if $u$ and $v$ have neighbors that match.}
\Begin{
\Return $\forall (u', v')\in M, ((u, u') \in E(Q) \rightarrow (v, v') \in E(G) \bigwedge \ell((u, u') )=\ell((v, v'))$
}
\caption{Function\textit{ neighborCheck}$(u,v,M)$.\label{Algo-neighborCheck}}
\end{algorithm}

\subsection{Extension to Directed/Edge Labeled Graphs}
A compact neighborhood index can also be computed for edge labeled graphs as well as for directed graphs. To take into account edge labels, a simple solution is to compute separately, a CNI, denoted $cni_v(u)$, for the vertices and a CNI, denoted $cni_e(u)$, for the edges and then given those two values, compute a global CNI of the considered vertex. $cni_v(u)$ and $cni_e(u)$ are computed using the bijection defined in Section \ref{sec-cni}. If edges and vertices are defined on the same set of labels, $cni_v(u)$ and $cni_e(u)$ can be computed  using the same dimension, i.e., the same value for $k$. If edges and vertices are defined on different sets of labels, $cni_v(u)$ and $cni_e(u)$ are computed with different dimensions. The global CNI of the vertex is computed with $k=2$.

An example is given in Figure \ref{Figure-EXTCNI}(a). In this example, vertices and edges use the same set of labels $\{a, b, c, d\}$ encoded by the set of integers $\{1, 2, 3, 4\}$. For instance, for vertex $u_1$,  we have:
 \begin{itemize}
 \item $cni_v(u_1)=g_4(0,2,0,0)=12$, and
 \item $cni_e(u_1)=g_4(0,1,0,1)=7$.
 \end{itemize}
 Then $cni(u_1)$ is computed using a bijection of dimension $2$ as follows:\\
 $cni(u_1)=g_2(cni_v(u_1), cni_e(u_1))=g_2(12,7)=202$.\\

 For directed graphs, we use a similar approach. We compute a CNI for vertex labels and two CNIs for edge labels: a CNI for ingoing edges, $cni_{IN}(v)$,  and a CNI for outgoing edges, $cni_{OUT}(v)$,. A global CNI for the vertex is obtained using these three values and the bijection with dimension 3 as depicted in Figure \ref{Figure-EXTCNI}(b):
 $cni(u_1)=g_3(cni_v(u_1), cni_{IN}(u_1), cni_{OUT}(u_1))=g_3(12,1,3)=919$.\\

\begin{figure*}
\centering
\includegraphics[scale=0.30]{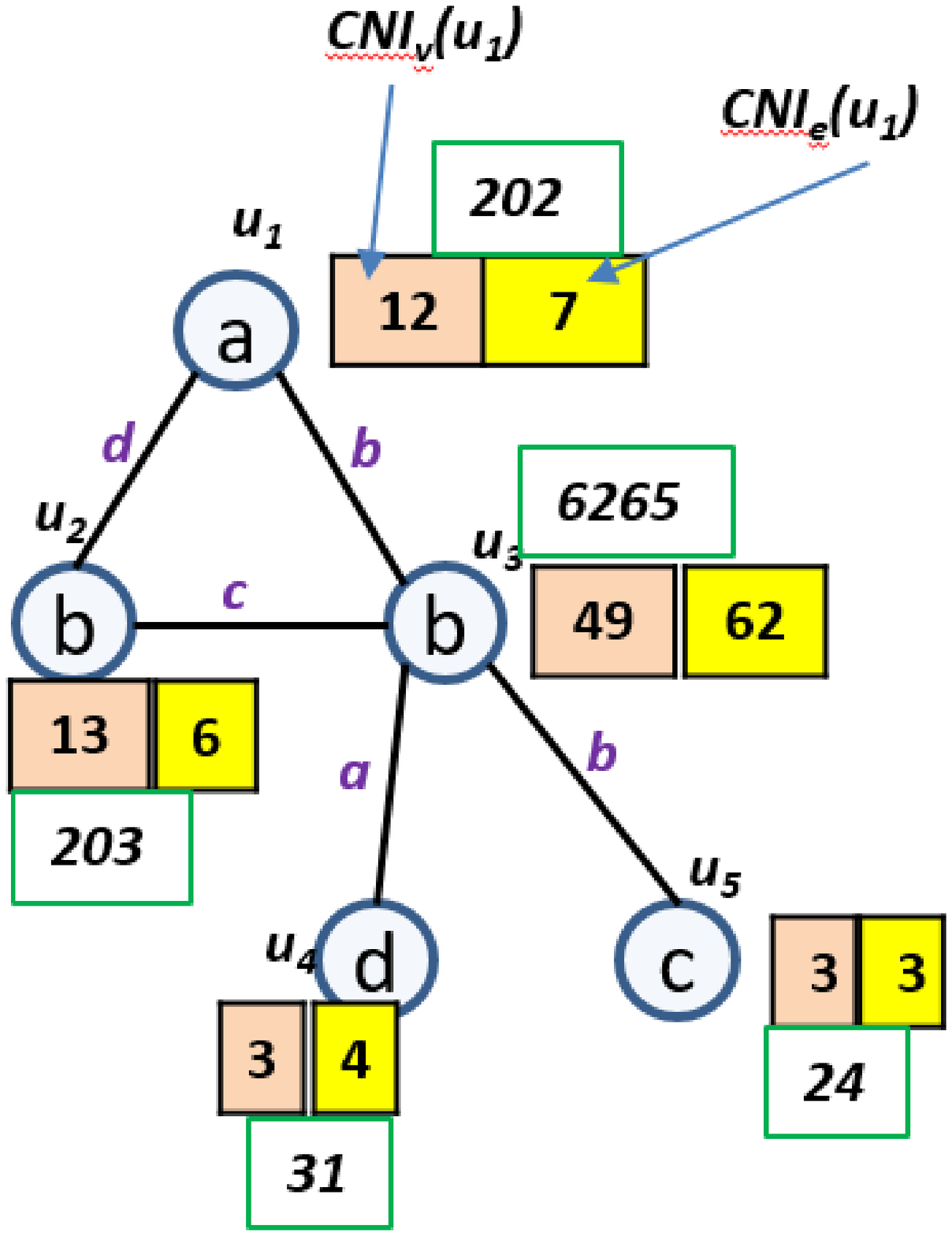}\quad
\includegraphics[scale=0.30]{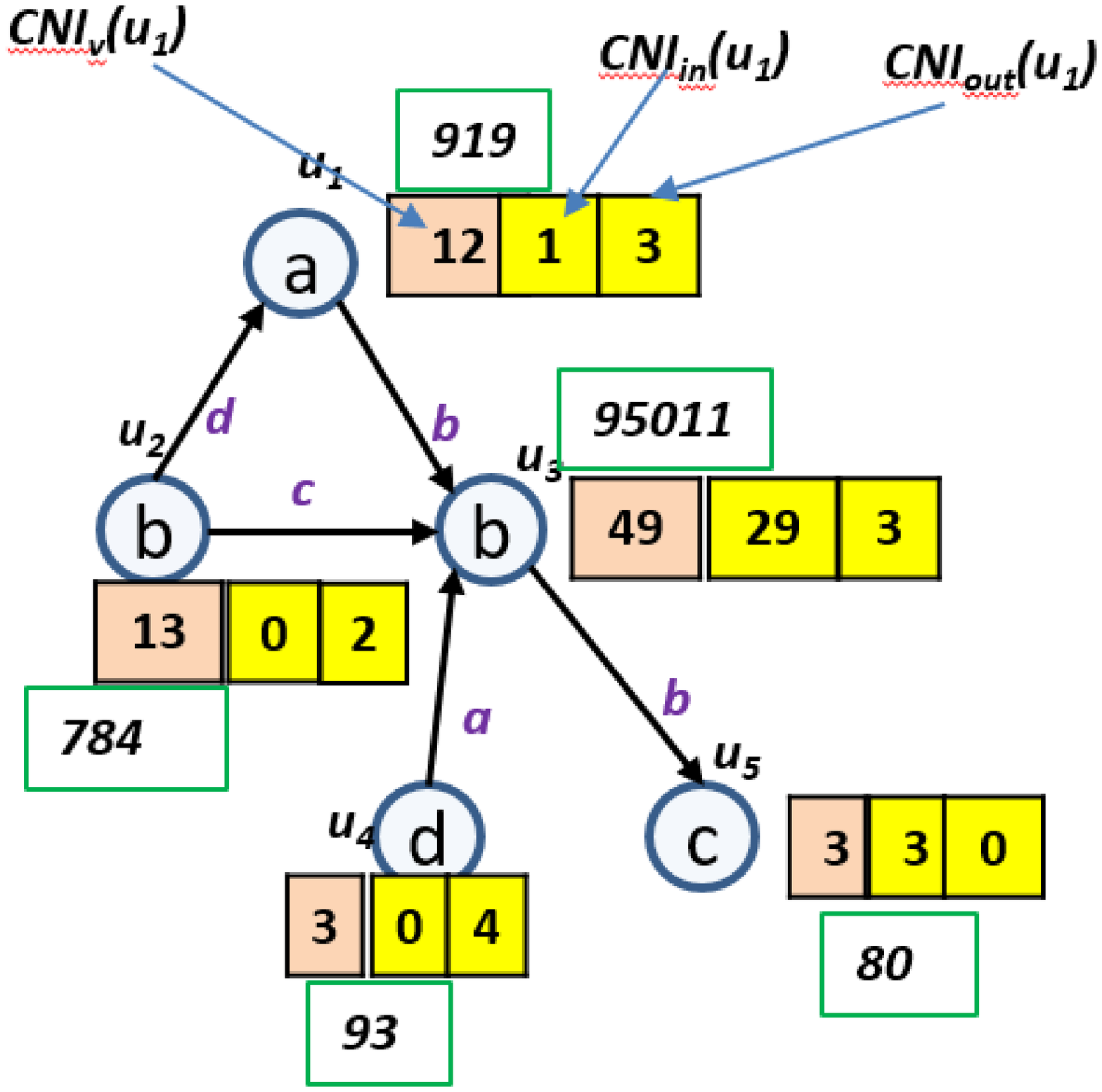}\\
(a) Edge Labeled Graph \hspace{4cm} (b) Directed Graph\\
\caption{CNIs for Directed and Edge Labeled Graphs.\label{Figure-EXTCNI}}
\end{figure*}
\section{$cni(v)$ at $(q\geq1)$-hops Neighborhood}
\label{app-proofcnik}
The compact neighborhood index can also be computed for the $q$-hops neighborhood with $q\geq1$. According to the desired power of the obtained filter, two algorithms can be considered:
\begin{enumerate}
\item A light filter featuring the whole $q$-hops neighborhood  with one single  CNI, denoted $cni_{1-q}(v)$. In this case, we consider, in the CNI formula given in Section \ref{sec-cni}, all the vertices reachable from vertex $v$ within range $q$, i.e., the ball of radius $q$ centred at vertex $v$. For instance, the CNI of vertex $u_1$ of the query of our running example (depicted in Figure \ref{Fig-Example}) featuring its neighborhoods at $2$-hops includes vertices $u_2$, $u_3$, $u_4$ and $u_5$ and is given by:\\
    $cni_{1-2}(u_1)=g_4(0,2,1,1)=48$. \\
    A data vertex $v$ is a candidate for a query vertex $u$ if $cni_{1-q}(v)\geq cni_{1-q}(u)$. However, two vertices having the same number of neighbors in their $q$-hops neighborhood and the same CNI at $q$-hops may not be isomorphic.
\item A strong filter featuring each $q$-hops neighborhood  with its own CNI. In this case, the CNI of vertex $v$ featuring its neighborhood at $q$-hops is a set of $q$ integers $\{cni_1(v), cni_2(v),\cdots ,cni_q(v) \}$ where $cni_j(v)$ is the CNI computed considering the neighbors of $v$ at exactly $j$-hops in a shortest path from $v$. $cni_1(v)$ is the CNI computed in Section \ref{sec-cni}. The other value are computed with the same formula and on the same set of labels as follows:
$cni_j(v)=\sum\mbox{\ensuremath{_{i=1}^{k}\hbar(i,x_1+...+x_i)}}$  where $k$ is the number of distinct labels in the query, and $x_i$, is the  number of occurrences of the label indexed $i$ in the $j$-hops neighborhood of $v$.  For instance, the CNI at $j=2$ of the query vertex $u_1$ of our running example (see Figure \ref{Fig-Example}) comprises vertices $u_4$ and $u_5$ and can be computed similarly to the CNI of its direct neighborhood, by:
$cni_2(u_1)= g_4(0,0,1,1)=6$.
The CNI  at $q$-hops can be used to prune the data vertices that are not candidate for a query vertex but that passe through the $(q-1)$-hops CNI as follows:

\begin{lem} [$q$-hop Degree filter] \label{lem-dk}
Given a query $Q$ and a data graph $G$, a data vertex $v \in V(G)$ is not a candidate of $u \in V(Q)$ if $deg_{\mathcal{L}(Q)}^q((v)<deg_{\mathcal{L}(Q)}^q(u)$ where $deg_{\mathcal{L}(Q)}^q(u)$ is the number of vertices reachable from $u$ with exactly $q$-hops in a shortest path from $u$ and have a label in $\mathcal{L}(Q)$.
\end{lem}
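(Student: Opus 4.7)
The plan is to argue by contrapositive: assume that $v$ is a candidate for $u$, so there exists an injective label-preserving homomorphism $h : V(Q) \to V(G)$ with $h(u) = v$ such that $(x,y) \in E(Q) \Rightarrow (h(x),h(y)) \in E(G)$. From this I want to construct an injection from the set counted by $deg^q_{\mathcal{L}(Q)}(u)$ into the set counted by $deg^q_{\mathcal{L}(Q)}(v)$, which immediately yields the required inequality.

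The natural candidate injection is the restriction of $h$ to the set of query vertices at shortest-path distance exactly $q$ from $u$. Two of the three requirements are routine: $h$ preserves labels, so every image carries a label in $\mathcal{L}(Q)$; a shortest $u$-$u'$ path of length $q$ in $Q$ maps under $h$ to a walk of length $q$ in $G$, giving $d_G(v,h(u')) \leq q$; and injectivity is inherited from $h$ itself.

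The main obstacle, and the only non-routine step, is to rule out the strict inequality $d_G(v,h(u')) < q$. Because the paper uses non-induced subgraph isomorphism, $G$ may carry extra edges between images of non-adjacent query vertices, and such shortcuts can collapse shortest-path distances on the data side. I would handle this by proving the whole family of $q$-hop degree filters simultaneously, by induction on $q$, with the base case $q = 1$ supplied by Lemma \ref{lem-deg}. At the inductive step, a level-by-level accounting is needed: if the image of some level-$q$ query vertex collapses to level $j < q$ in $G$, it is absorbed into the level-$j$ count on the data side, but by the inductive hypotheses at all levels $j' < q$ the data side already has enough room at those lower levels, so the surplus on the query side must in the end be realised at level exactly $q$ on the data side. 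Formalising this requires a careful joint inequality over all levels $1, \ldots, q$, and that is where I would expect the bulk of the technical effort to sit.
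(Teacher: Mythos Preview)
The paper gives no proof at all: it simply asserts that Lemma~\ref{lem-dk} ``is straightforward.'' So there is nothing to compare your proposal against on the paper's side, and you have in fact gone further than the authors by spotting the distance-collapse issue caused by non-induced subgraph isomorphism. That observation is correct and is exactly the point where a rigorous proof would have to do real work.

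Unfortunately, the inductive repair you sketch cannot succeed. Your plan is, roughly, to use that for every $j\le q$ the injection $h$ sends the $Q$-ball of radius $j$ around $u$ into the $G$-ball of radius $j$ around $v$; you then want to argue that the ``surplus'' at lower levels forces enough vertices to appear at level exactly $q$ on the data side. But knowing $\sum_{j'\le j}\deg^{j'}_{\mathcal L(Q)}(v)\ge\sum_{j'\le j}\deg^{j'}_{\mathcal L(Q)}(u)$ for all $j\le q$ does \emph{not} imply $\deg^{q}_{\mathcal L(Q)}(v)\ge\deg^{q}_{\mathcal L(Q)}(u)$: the data graph is free to pack all of its mass into the lower levels. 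Concretely, take $Q$ to be the path $u_0\!-\!u_1\!-\!u_2$ and $G$ the triangle $v_0\!-\!v_1\!-\!v_2\!-\!v_0$, all vertices labelled $a$. The map $u_i\mapsto v_i$ is a valid (non-induced) embedding, so $v_0$ is a candidate for $u_0$; yet $\deg^{2}_{\mathcal L(Q)}(u_0)=1$ while $\deg^{2}_{\mathcal L(Q)}(v_0)=0$. The level-by-level bookkeeping you propose gives only $2\ge 2$ for the cumulative counts, and nothing forces a vertex to sit at exact distance $2$ in $G$.

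This counterexample shows that the gap is not in your argument alone: the lemma, read literally with ``exactly $q$ hops on a shortest path,'' is false for non-induced subgraph isomorphism. A correct variant would replace ``exactly $q$'' by ``at most $q$'' (i.e.\ compare ball sizes rather than sphere sizes), for which your injection $h$ restricted to the $q$-ball gives an immediate one-line proof. If you need the exact-distance version, you would have to restrict to induced subgraph isomorphism, where distances are preserved and the collapse problem disappears.
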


\begin{lem} [$CNI_{q}$ filter] \label{lem-CNIk}
Given a query $Q$ and a data graph $G$, a data vertex $v \in V(G)$ that verifies the $CNI_{q-1}$ filter and the $(q)$-hops Degree filter is not a candidate of $u \in V(Q)$ if $cni_{q}(v)<cni_{q}(u)$.
\end{lem}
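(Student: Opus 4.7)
I would prove Lemma \ref{lem-CNIk} by contrapositive, mirroring the structure of the $1$-hop argument used for Lemma \ref{lem-cni}. Assume $v$ is a candidate for $u$, i.e., there is an injective label-preserving edge-preserving map $h:V(Q)\to V(G)$ with $h(u)=v$. Writing $x_i^{(q)}(u)$ for the number of vertices at shortest-path distance exactly $q$ from $u$ in $Q$ carrying the label indexed by $i\in\{1,\dots,k\}$ (and $x_i^{(q)}(v)$ analogously in $G$), we have $cni_q(u)=g_k(x_1^{(q)}(u),\dots,x_k^{(q)}(u))$ and likewise for $v$. The plan is to establish the componentwise domination $x_i^{(q)}(u)\le x_i^{(q)}(v)$ for every $i$, and then invoke monotonicity of the pairing function to conclude $cni_q(u)\le cni_q(v)$.

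Monotonicity is the easy ingredient. Since $\hbar(p,s)=\binom{s+p-1}{p}$ satisfies $\hbar(p,s+1)/\hbar(p,s)=(s+p)/s\ge 1$ for $s\ge 1$, each term of $g_k$ is non-decreasing in its second argument. Any componentwise dominated pair $(x_i)\le(y_i)$ then produces dominated partial sums $x_1+\cdots+x_j\le y_1+\cdots+y_j$, hence $g_k(x)\le g_k(y)$. This step is routine and does not rely on the hypotheses of the lemma; it is exactly the mechanism used in Appendix \ref{app-prooflemcni} and it extends verbatim.

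The hard part is proving the componentwise domination on the sphere at distance exactly $q$. Let $W_q(u)$ denote this sphere in $Q$. Because $h$ preserves edges, every shortest path of length $q$ in $Q$ lifts to a walk of length $q$ in $G$, so $h(W_q(u))$ is an injective image consisting of distinct vertices each at distance at most $q$ from $v$. The main obstacle is distance collapse: some $w\in W_q(u)$ may be such that $h(w)$ lies at distance $q'<q$ from $v$ in $G$, because $G$ can possess shortcut edges absent from $Q$, so the naive injection of $W_q(u)$ into $W_q(v)$ is not immediate. To handle this I would argue inductively, leveraging precisely the fact that $v$ already passes the $CNI_{q-1}$ filter (and, implicitly, the analogous filters at every smaller radius, since those are what $CNI_{q-1}$ enforces together with the base $CNI$ filter). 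A distance-by-distance accounting, processed outward shell by shell from $j=1$ up to $j=q$, peels off the vertices of $W_j(v)$ that must be reserved for genuine distance-$j$ neighbors of $u$; the slack provided by the inductive domination $x_i^{(j)}(u)\le x_i^{(j)}(v)$ at each smaller $j$ is what absorbs the collapsed images without exhausting the budget. Combined with the $q$-hop Degree filter, which guarantees that $v$ has at least as many eligible neighbors at radius $q$ as $u$, this yields a Hall-style injection of the non-collapsed part of $h(W_q(u))$ into $W_q(v)$ that preserves labels, giving $x_i^{(q)}(u)\le x_i^{(q)}(v)$. The delicate bookkeeping of this collapse-absorption argument is where I expect essentially all of the real work to lie.
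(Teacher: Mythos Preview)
Your plan is considerably more careful than the paper's own treatment, which consists of the single sentence ``The proof of Lemma~\ref{lem-CNIk} is similar to the proof of Lemma~\ref{lem-cni}'' and never confronts the distance-collapse phenomenon you correctly isolate. However, the Hall-style absorption argument you sketch cannot be completed, and the obstruction is structural rather than technical: the hypotheses of the lemma are too weak to yield the componentwise bound $x_i^{(q)}(u)\le x_i^{(q)}(v)$ you are after. Passing the $CNI_{q-1}$ filter is a single \emph{scalar} inequality $cni_{q-1}(v)\ge cni_{q-1}(u)$, and the $q$-hop degree filter is again only a scalar count; neither provides label-by-label slack, so when a vertex of $h(W_q(u))$ collapses into a lower shell there is nothing forcing $W_q(v)$ to contain a replacement \emph{of the same label}.

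Concretely, take $k=3$ with label indices $A\mapsto 1$, $B\mapsto 2$, $C\mapsto 3$. Let $Q$ be the path $u\text{--}a\text{--}b$ with $\ell(u)=B$, $\ell(a)=C$, $\ell(b)=A$. Let $G$ have vertices $v,a',b',c'$ with $\ell(v)=B$, $\ell(a')=C$, $\ell(b')=A$, $\ell(c')=C$ and edges $(v,a')$, $(v,b')$, $(a',b')$, $(a',c')$. Then $h(u)=v$, $h(a)=a'$, $h(b)=b'$ is a valid embedding, so $v$ \emph{is} a candidate for $u$. One checks $cni_1(u)=g_3(0,0,1)=1\le 6=g_3(1,0,1)=cni_1(v)$ and $deg^2_{\mathcal{L}(Q)}(u)=1=deg^2_{\mathcal{L}(Q)}(v)$, so both stated hypotheses hold; yet $cni_2(u)=g_3(1,0,0)=3$ while $cni_2(v)=g_3(0,0,1)=1$, i.e.\ $cni_2(v)<cni_2(u)$. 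Hence the statement as written is false, and neither your argument nor the paper's one-line reduction can be repaired without strengthening the hypotheses (for instance to a label-wise NLF-type condition at every radius $\le q$, which is precisely what your injection argument would actually need).
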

Lemma \ref{lem-dk} is straightforward. The proof of Lemma \ref{lem-CNIk} is similar to the proof of Lemma \ref{lem-cni}.

We note also that the strong filter can also be represented by a single CNI by computing it on the set $\{cni_1(v), cni_2(v),\cdots ,cni_q(v) \}$, i.e., $g_k(cni_1(v), cni_2(v),\cdots ,cni_q(v))$.
\end{enumerate}

\subsection{Extension to Larger Graphs}

For large data graphs, we aim to keep in memory as few vertices and edges as the three filters can achieve. So, filtering begins while reading the data graph. For this, we compute vertex degrees and CNIs incrementally during graph parsing. Only a single pass of the graph is needed. This is important if we deal with a graph stream or a sequential read of a graph from disk, i.e., a graph that does not fit into main memory and that is loaded  part by part. We keep in memory only the vertices (and the corresponding edges) that verify the label, degree and CNI filters. These are the vertices and edges that will be used during subgraph search.
As we parse the data graph, the label filter is straightforward. However, the degree and the CNI can be used when their values, computed incrementally, are sufficient for pruning. However, this depends on how the stream of edges arrives. If edges are sorted, i.e., we access all the edges involving vertex $i$, then all the edges involving vertex $i+1$ and so on,  the amount of pruning will be larger during the parse than in the case edges arrive randomly.

Algorithm \ref{Algo-DataProcessing} presents the filtering actions performed during the data graph reading in the case where edges are sorted. In this case, the three filters can be applied as the edges of  a vertex are accessed avoiding to store them. When all the edges incident to the current vertex are available (see lines 10-17), we can compute the CNI of the current data vertex and compare it with the CNIs of the query vertices (see lines 19-23). If the vertex does not verify the CNI filter, the vertex and all its edges are pruned (see lines 20-23). The filtered data graph, denoted $G_Q$ obtained at the end of the reading-filtering process is then handled by the ILGF algorithm for global filtering.

\begin{algorithm}
\KwData{A Data Graph $G$ (stream of edges). }
\KwResult{A filtered data graph $G_Q$.}
\Begin{
//processing a stream of sorted edges \\
$V(G_Q) \leftarrow \emptyset$\;
$E(G_Q) \leftarrow \emptyset$\;
\Repeat  {end of stream}{
    read edge $(x,y)$\;
   \If {($x \notin V(G_Q)$ and $\ell(x) \in \mathcal{L}(Q)$)}{
    $V(G_Q) \leftarrow V(G_Q) \cup \{x\}$\;
    $current \leftarrow x$\;
    \While {$x=current$}{
    \If {($\ell(y) \in \mathcal{L}(Q)$)}{
      \If{($(y \notin V(G_Q))$)}{
      $V(G_Q) \leftarrow V(G_Q) \cup \{y\}$\;
      $E(G_Q) \leftarrow E(G_Q) \cup \{(x,y)\}$\;
      }
    }
    read edge $(x,y)$\;
  }
  compute cni(current)\;
  \If {($\forall u \in V(Q), !cniVerify(current,u)$)}{
     remove current from $V(G_Q)$\;
     remove all the edges of $current$ from $E(G_Q)$\;
         }
         }
 }
ILGF($G_Q$)\;
}
\caption{Large Data Graph Filtering .\label{Algo-DataProcessing}}
\end{algorithm}
\subsection{Discussion}
As presented in Section \ref{sec-cni}, the generalized “Cantor $k$-tupling function” used to compute the CNIs is a polynomial of degree $k=|\mathcal{L}(Q)|$. Its formula needs to compute factorials. However, factorials increase faster than all polynomials and exponential functions \footnote{https://en.wikipedia.org/wiki/Factorial}. Even if the CNI formula does not grow as fast as a factorial because it is a fraction with factorials in the bottom that will decrease its growth, it is not feasible to compute factorials using the classical programming types: with a C++ \textit{unsigned long long int}, which requires $8$ bytes on a windows machine, the maximum integer for which we can compute a factorial is $65$ and it gives the $19$ digits integer $9223372036854775808$. Even if $p!(s-1)!$ are dividing quantities that will decrease the value of the CNI, we must compute them to get the final value of the CNI even if it is much smaller than the value of the computed factorials.

To deal with this, we first reduce the CNI formula to its simplest representation. We recall that a CNI is computed by the following formula:
$$g_{k}(x_1,x_2,x_3,\cdots,x_k)=\sum\mbox{\ensuremath{_{j=1}^{k}\hbar(j,x_1+...+x_j)}}$$
  and $$\hbar(p,s)=\binom{s+p-1}{p}=\frac{(s+p-1)!}{p!(s-1)!}$$
  However, $\hbar(p,s)$ can be simplified. In fact, $\frac{(s+p-1)!}{p!(s-1)!}=\frac{(s+p-1)(s+p-2)\cdots(s+1)s}{p!}$. So, the top of the fraction is a product of $p$ terms. Furthermore, we can reduce the size of this product without computing $p!$. Because its final result is an integer, the top of the fraction must be a multiple of its bottom, i.e., $p!$. Accordingly, we do in priority the divisions as detailed by Algorithm \ref{algo-reduction}. As soon as the top $st$ of the fraction is great enough to be divided by a term from $p!$, the division is performed (see lines 10-15). This allows us to compute CNIs without unnecessary out-of range errors due to factorial computations.

\begin{algorithm}
\KwData{integers $s$ and $p$. }
\KwResult{$\hbar(p,s)=\frac{(s+p-1)(s+p-2)\cdots(s+1)s}{p!}$.}
\Begin{
\If {$p>(s+p-1)$} {\Return 0\;}
\If {$(s+p-1=1)$} {\Return 1\;}
$st \leftarrow 1$\;
$i \leftarrow 1$\;
\For{($t = s;\ t\leq(s+p-1);\ t=t+1$)}{					
	$st \leftarrow st*t$\;
\While {$(i\leq p) \wedge (st \% i =0) \wedge (st>0)$}{
			$st \leftarrow st/i$\;
			$i \leftarrow i+1$\;
	}

	}
\Return $st$\;
}
\caption{Optimising the Computation of CNIs .\label{algo-reduction}}
\end{algorithm}

It is also worth noting that in the formula of CNI, $k=|\mathcal{L}(Q)|$ and  $\sum\mbox{\ensuremath{_{i=1}^{k} x_i}}\leq \Delta$ where $\Delta$ is the maximum degree in the data graph $G$. $k$ is bounded by $|V(Q)|$ but  $\Delta$  depends on the application and the type of graphs that it deals with. The formula of the CNI grows similarly to $\frac{\Delta^k}{k!}$. To adapt to big values of $k$ and $\Delta$, we worked on two main solutions:
\begin{enumerate}
\item A divide an conquer approach parameterised by the values of $k$ and $\Delta$. The main idea here is to process the $k$ labels in small packets of $s$ labels each. The value of $s$ is chosen so that to ensure big values for the $x_i$. In this case, we compute a CNI for each packet of labels. However this means that comparing two vertices is no more feasible in constant time, i.e., $\mathcal{O}(1)$, it is $\mathcal{O}(k/s)$.
\item A specialised library for large numbers, which is a more general solution. In our experiments, we used the GMP library \footnote{https://gmplib.org/} which is designed to process large numbers independently of the size of programming language types. With this library, one can handle any large number. The sole limitation is the available main memory. It is also designed to be as fast as possible, both for small operands and for huge operands for all arithmetic operations.  Another interesting feature of this library is that memory for storing the number is allocated incrementally with the growth of the number. When, the number is no more needed, we can free the space allocated to its storage. In our implementation, we do not keep  large CNIs, once computed, we compute and store the logarithm of the CNI and we free the space allocated for the CNI. We recall that logarithm is a bijective function that grows very slowly for large numbers and are consequently used to compress large-scale scientific data and store it. So, obtaining large values of CNIs have no negative effect on the proposed approach.
\end{enumerate}
\section{Experiments}
\label{Sec-Evaluation}
We evaluate the  performance of our algorithm, \textit{CNI-Match} (for Compact Neighborhood Index based Matching), over various types of graphs, sizes of queries and  number of labels. We also compare it with the most recent state of the art algorithm, CFL-Match \cite{Bi2016}. Note that CFL-Match is compared to the other existing solutions, such as Turbo$_{ISO}$, QuickSI and SPath, and showed to be more efficient in \cite{Han2013,Ren2015,Bi2016}.

For a fair comparison, we implemented the two algorithms\footnote{The source code of the two algorithms is available on Git and can be easily provided.} in the same environment and framework using C++ and the SNAP library\footnote{Stanford Network Analysis Platform. http://snap.stanford.edu/}. We also used compiling option $-O3$. For CNI-Match, we used the GMP\footnote{https://gmplib.org/} specialised library to compute factorials and store them. All experiments are performed on an \textit{Intel} $i5$ $3.50$ GHz, $64$ bits computer with $8$ GB of RAM running windows 7.

We first describe the datasets used in the experiments, then we present our results.
\subsection{Datasets}
\noindent We use three main datasets which are known datasets used by almost all existing methods in their evaluation process. So, we mainly use them as comparative datasets. The underlying graphs represent protein interaction networks coming from three main organisms: human (HUMAN and HPRD datasets) and  yeast (YEAST dataset). The Human dataset consists of one large graph representing a protein interaction network. This graph has $4,675$ vertices and $86,282$ edges. It is a dense graph with an average degree of $36.9$. The number of unique labels  is $44$. The HUMAN dataset is a known difficult instance for subgraph isomorphism search \cite{Katsarou2017}. HPRD is a graph that contains $37,081$ edges and $9,460$ vertices. The number of unique labels in the dataset is $307$ and the average degree is  $7.8$. YEAST contains $12,519$ edges, $3,112$ vertices, and $71$ distinct labels. The average degree of the graph is $8.1$. The HUMAN dataset is available in the RI database of biochemical data \cite{Bonnici2013}. HPRD and YEAST come from the work of \cite{Lee2013} and \cite{Bi2016}.
\begin{table} [t]
	\caption{Graph Dataset Characteristics. }
$|\Sigma|$ is the number od distinct labels in the data graph.\\
	\label{Table-datasets}
	\centering
	\begin{scriptsize}
	\begin{tabular}{|c|c|c|c|c|}
		\hline
		Dataset     &  $|V|$    & $|E|$        & $|\Sigma|$ \tabularnewline
		\hline
		HUMAN           & 4,675       & 86,282          &44           \tabularnewline
		HPRD            &9,460       &37,081           &307   \tabularnewline
		YEAST           &3,112       &12,519           &71           \tabularnewline
\hline		
	\end{tabular}
			\end{scriptsize}
	
\end{table}
\begin{figure*}[t]
\centering
\includegraphics[ scale=0.30]{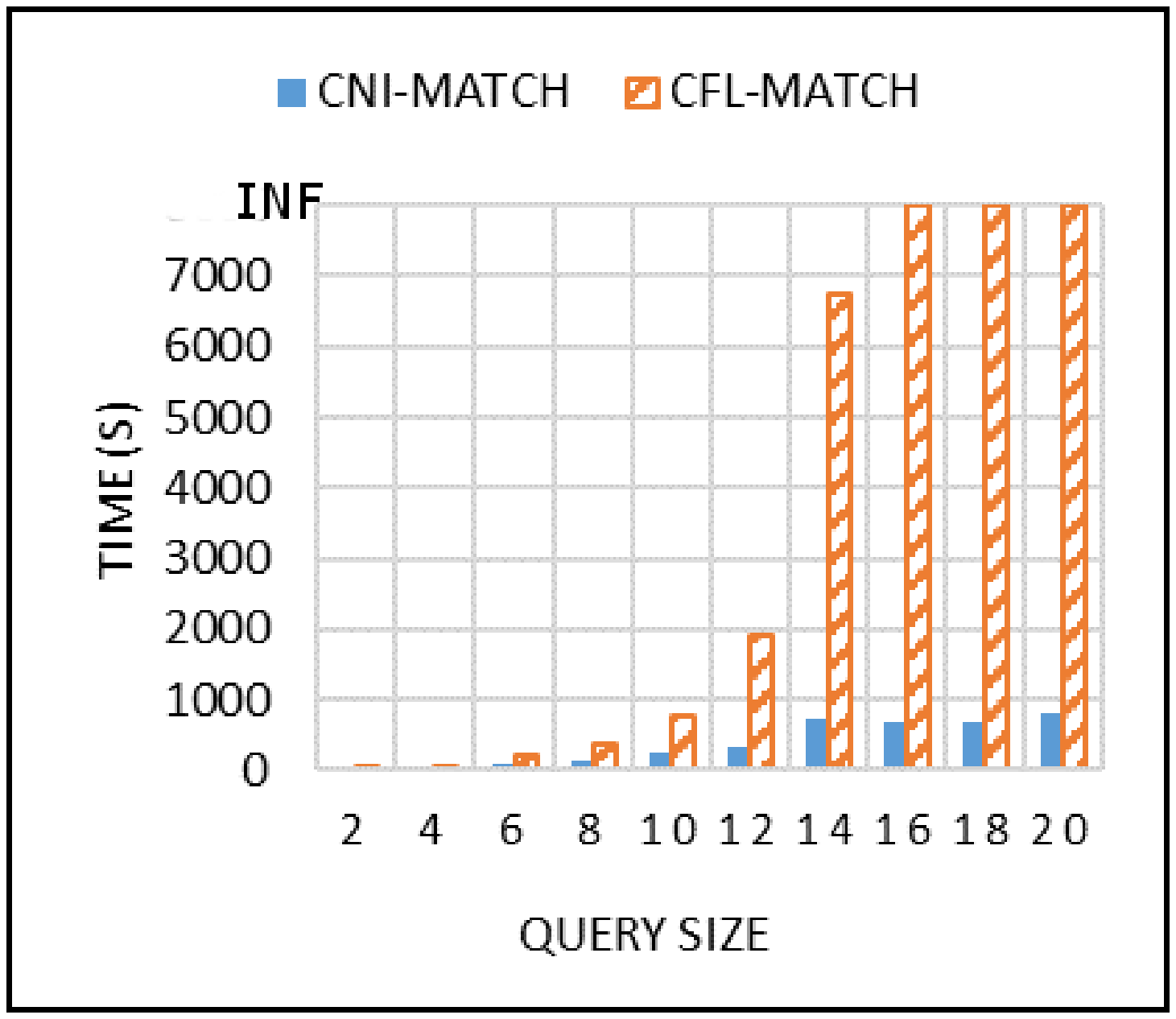}
\hspace{1cm}
\includegraphics[ scale=0.30]{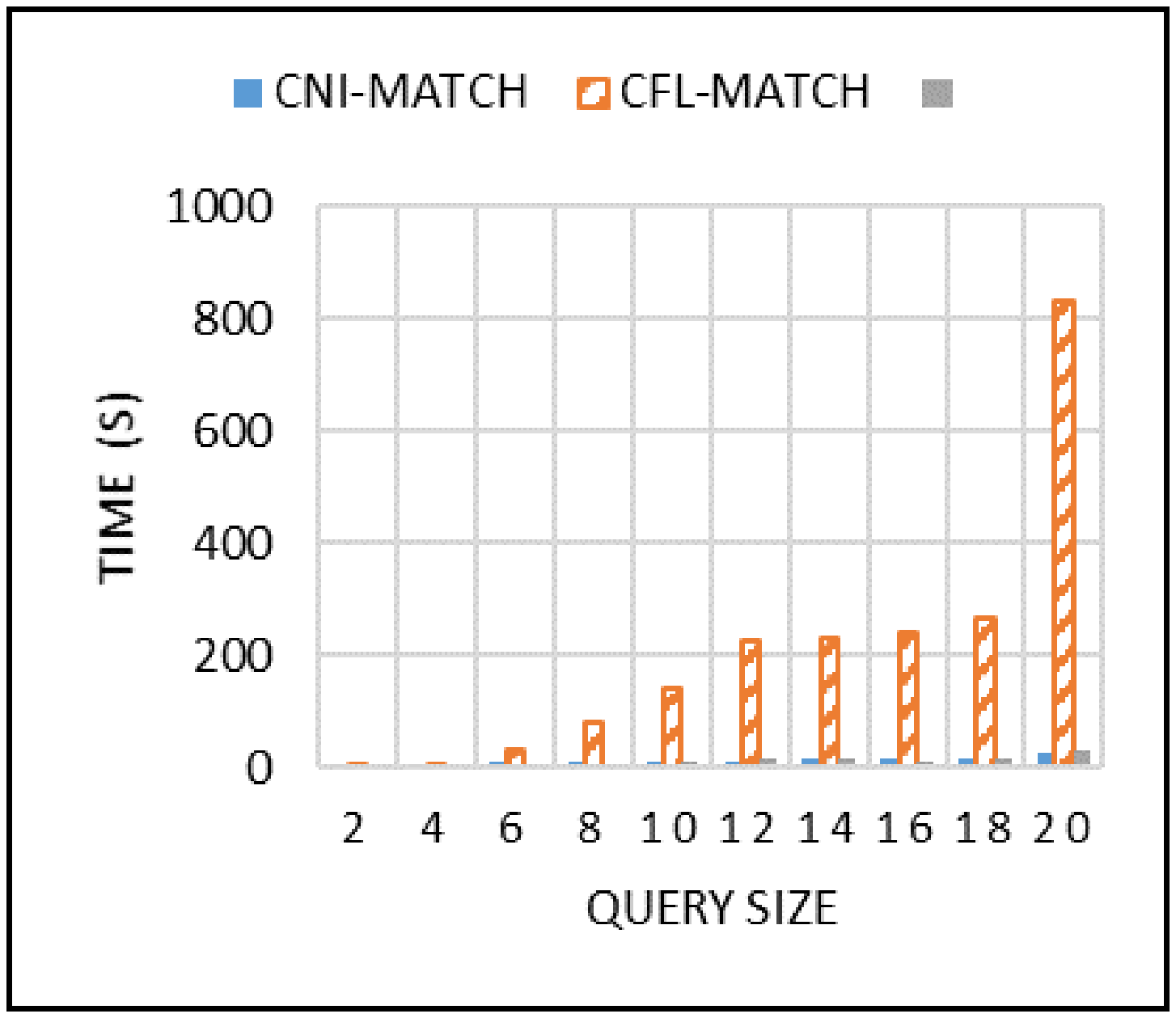}
\hspace{1cm}
\includegraphics[ scale=0.30]{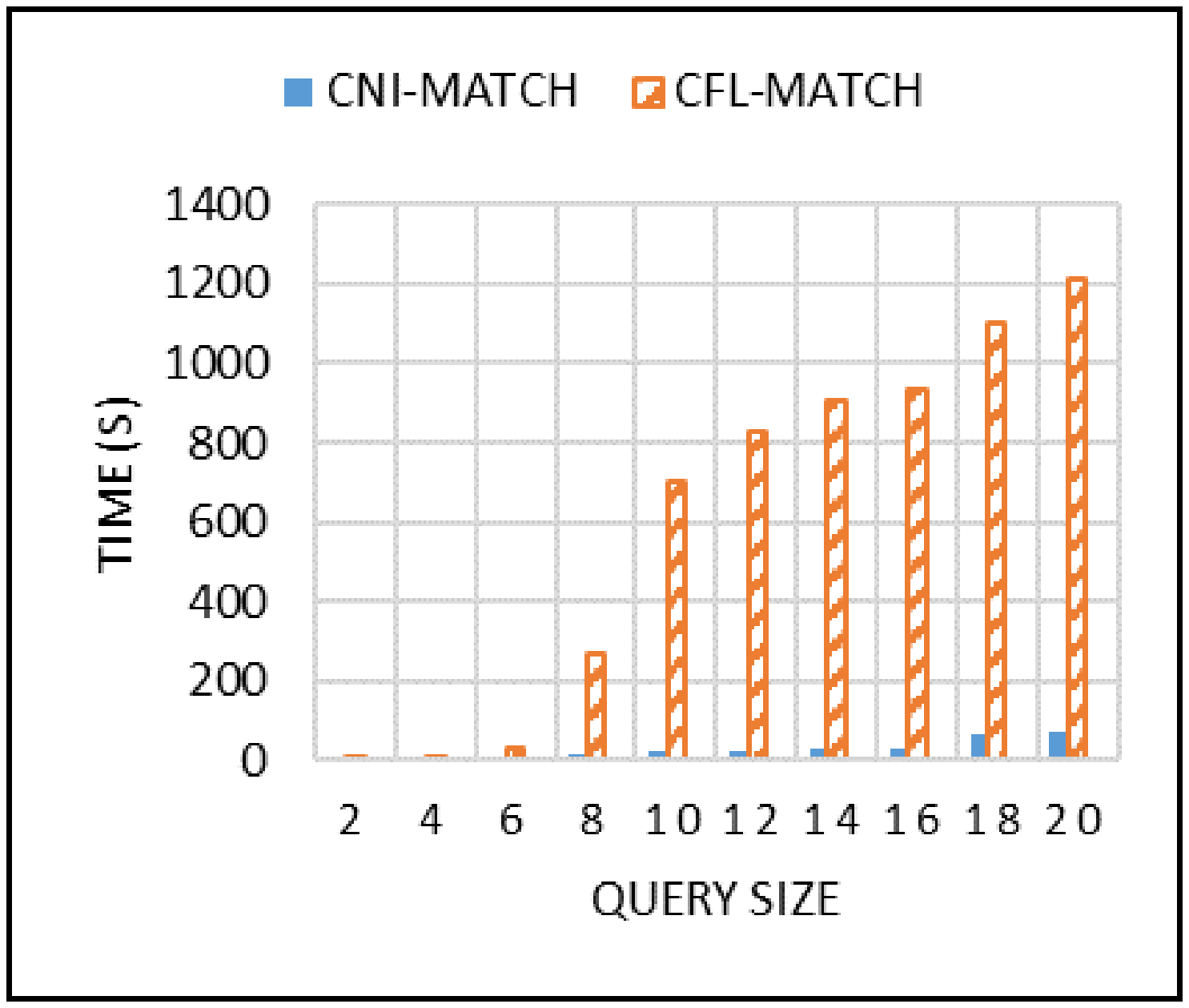}\\
(a) HUMAN dataset    \hspace{2cm}(b) YEAST dataset   \hspace{2cm}  (c) HPRD dataset
\caption{Time performance on sparse queries (varying $|V(Q)|$). \label{Fig-SmallDS}}
\end{figure*}
\begin{figure*}[t]
\centering
\includegraphics[ scale=0.30]{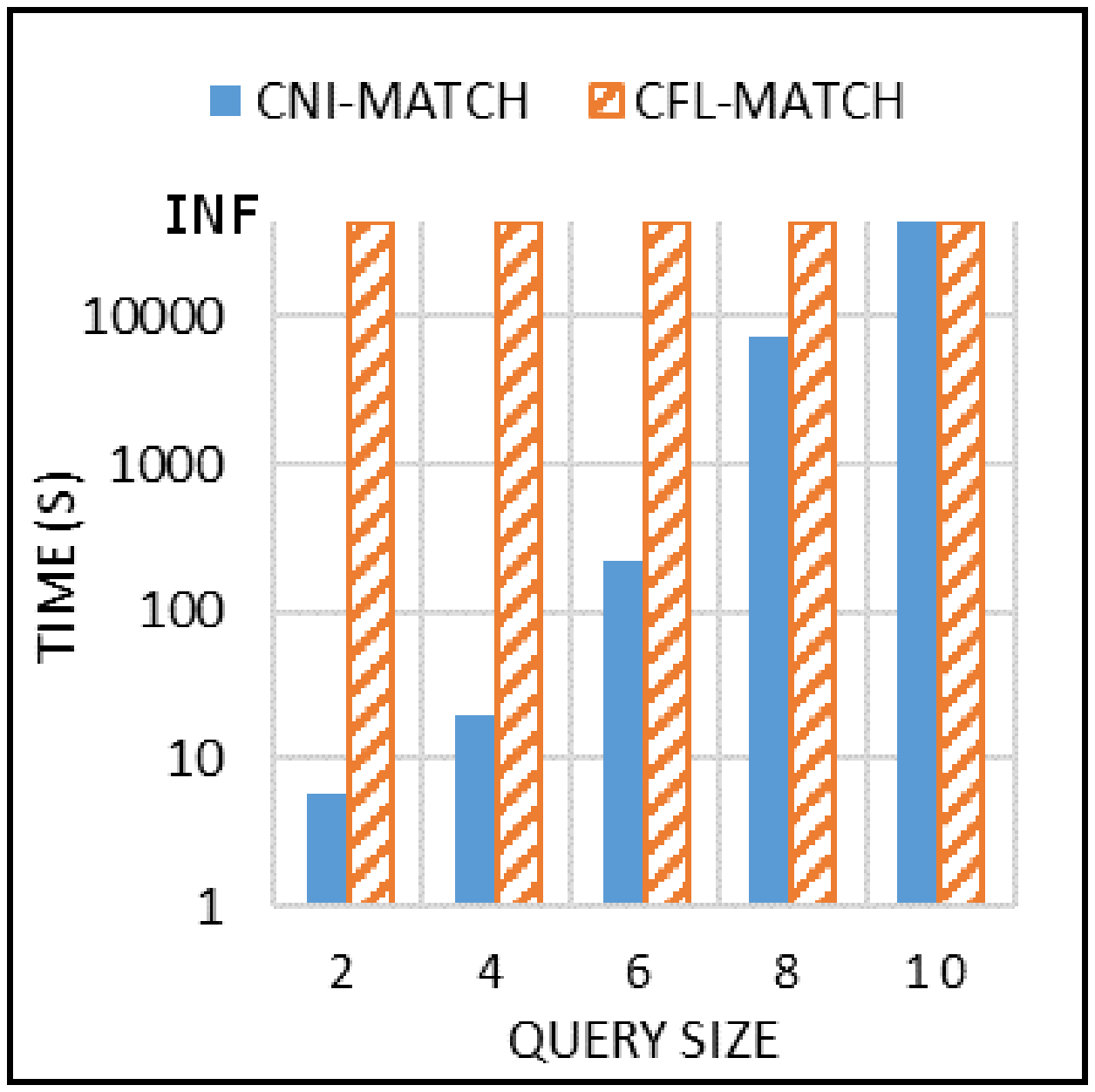}
\hspace{1cm}
\includegraphics[ scale=0.30]{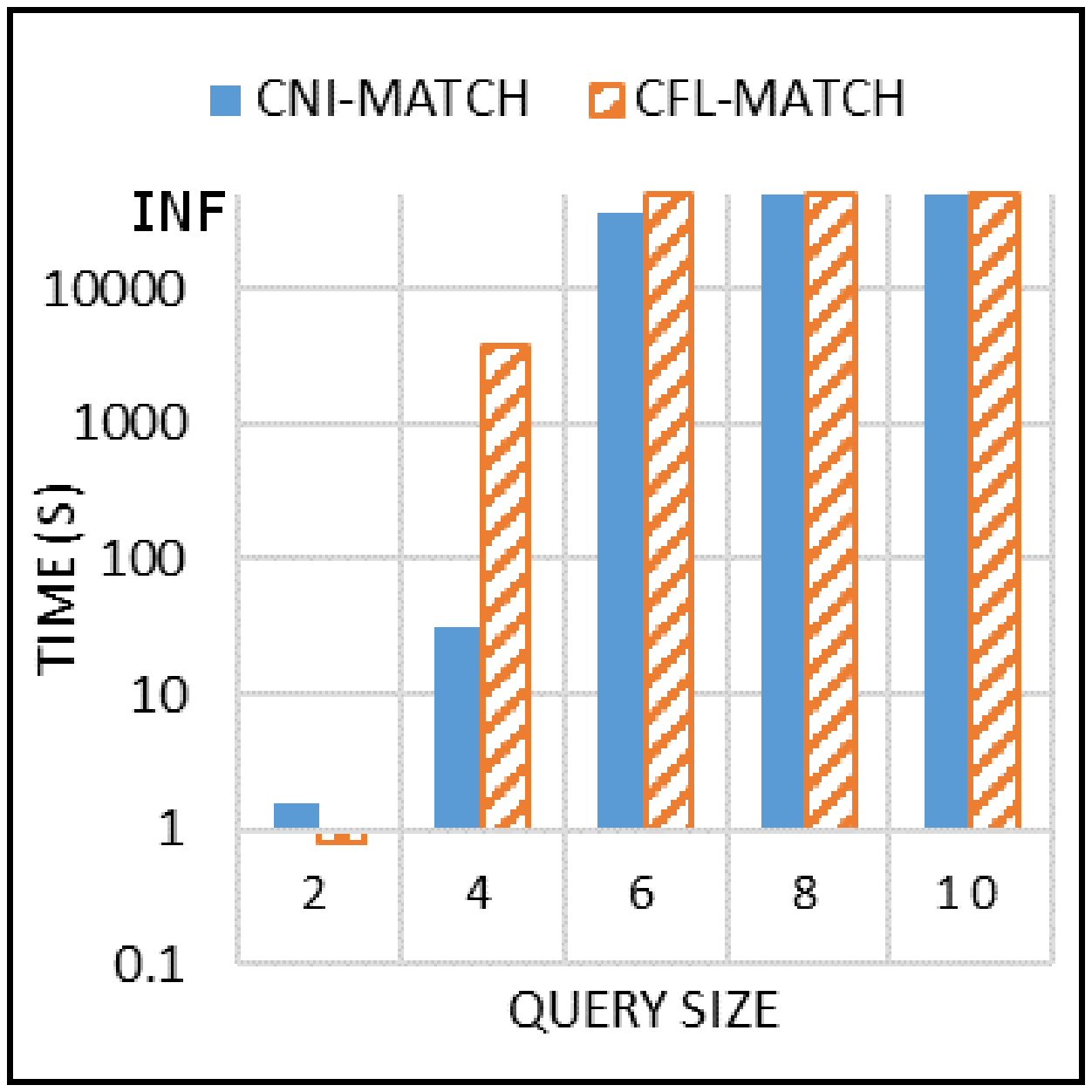}
\hspace{1cm}
\includegraphics[ scale=0.30]{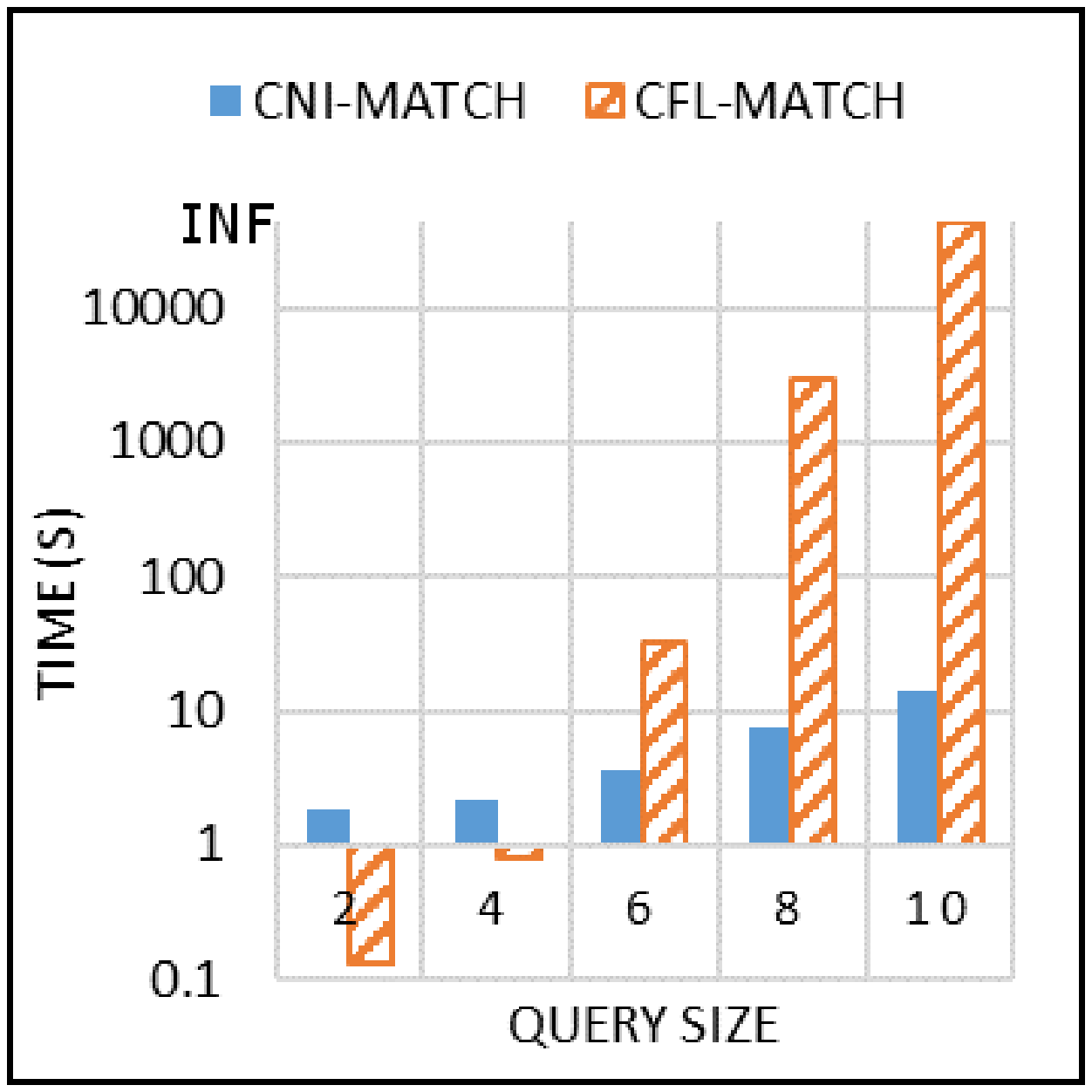}\\
(a) HUMAN dataset    \hspace{3cm}(b) YEAST dataset   \hspace{2cm}  (c) HPRD dataset (logscale)
\caption{Time performance on dense queries (varying $|V(Q)|$). \label{Fig-SmallDS2}}
(Results are in log scale)
\end{figure*}
\begin{figure*}[t]
\centering
\includegraphics[ scale=0.30]{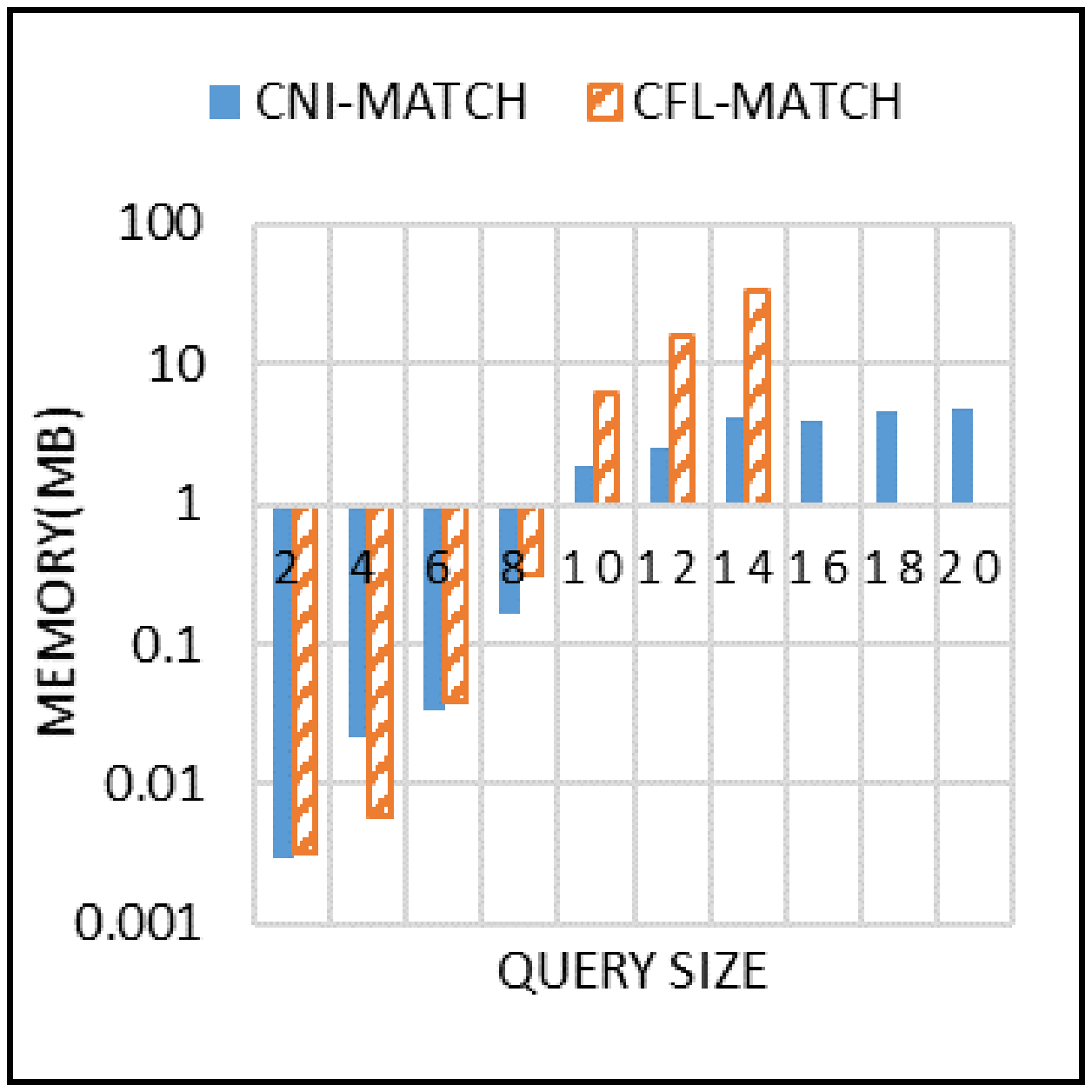}
\hspace{1cm}
\includegraphics[ scale=0.30]{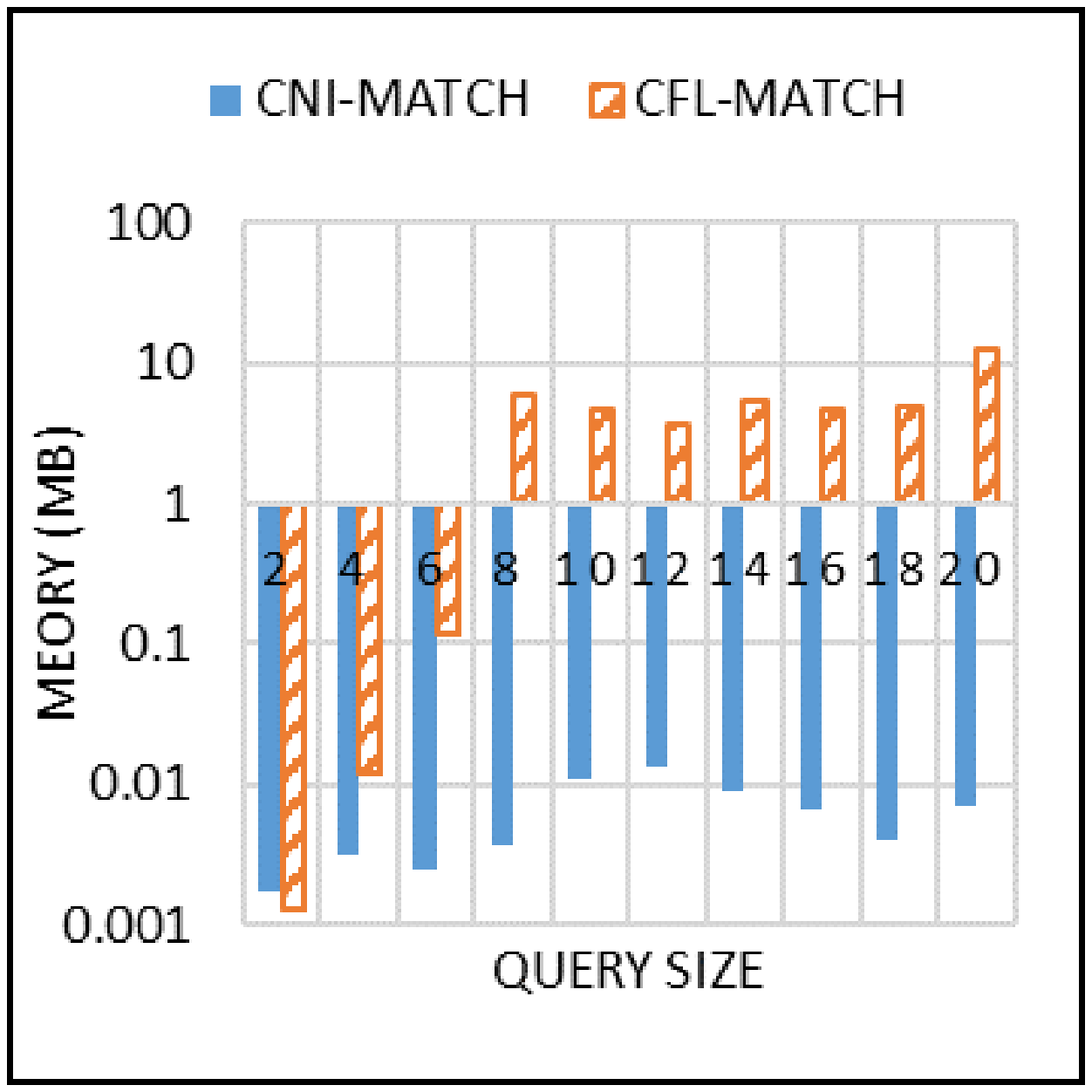}
\hspace{1cm}
\includegraphics[ scale=0.30]{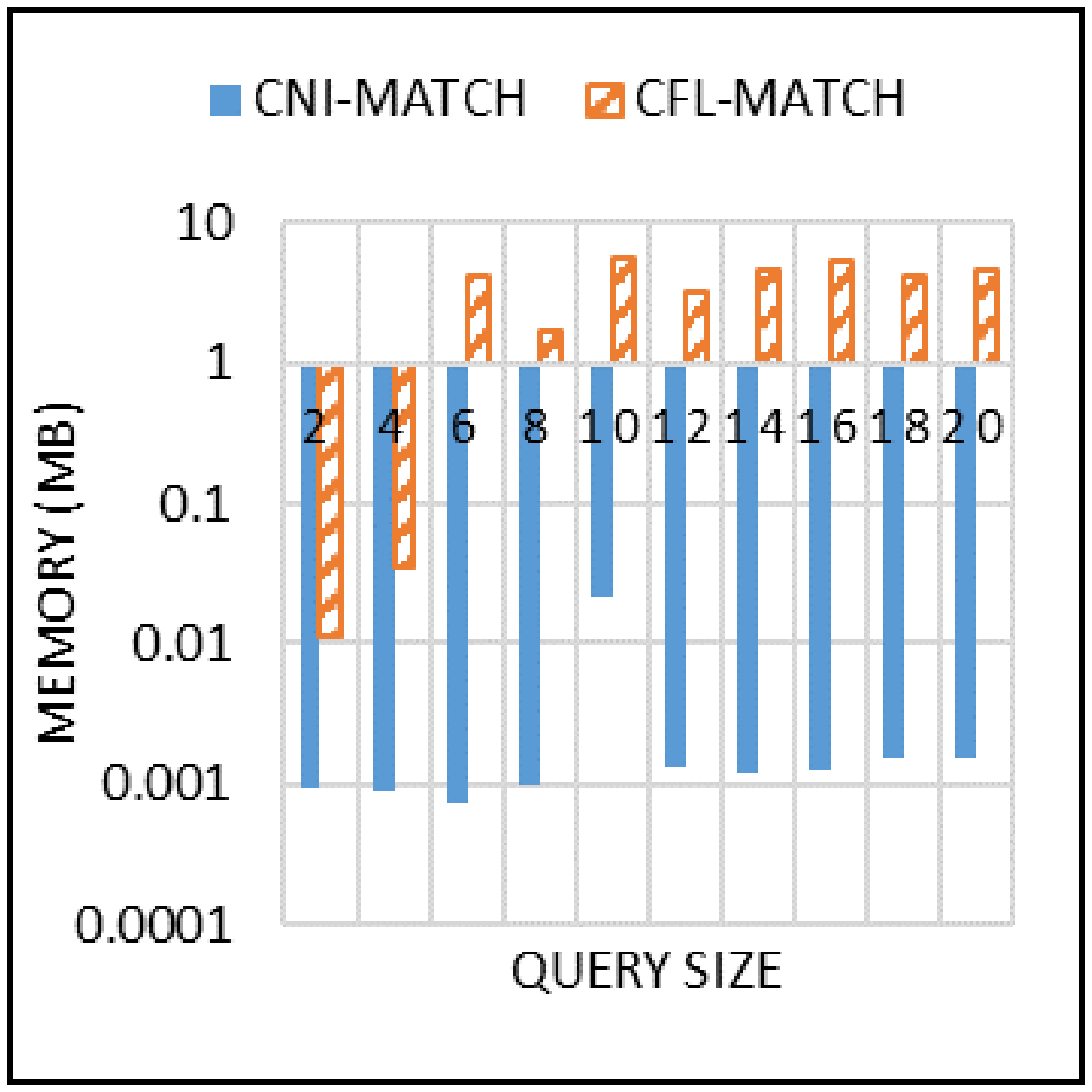}\\
(a) HUMAN dataset    \hspace{2cm}(b) YEAST dataset   \hspace{2cm}  (c) HPRD dataset
\caption{Memory space for sparse queries. (Results are in log scale). \label{Fig-SmallDS3}}
\end{figure*}
\begin{figure}[t]
\centering
\includegraphics[ scale=0.30]{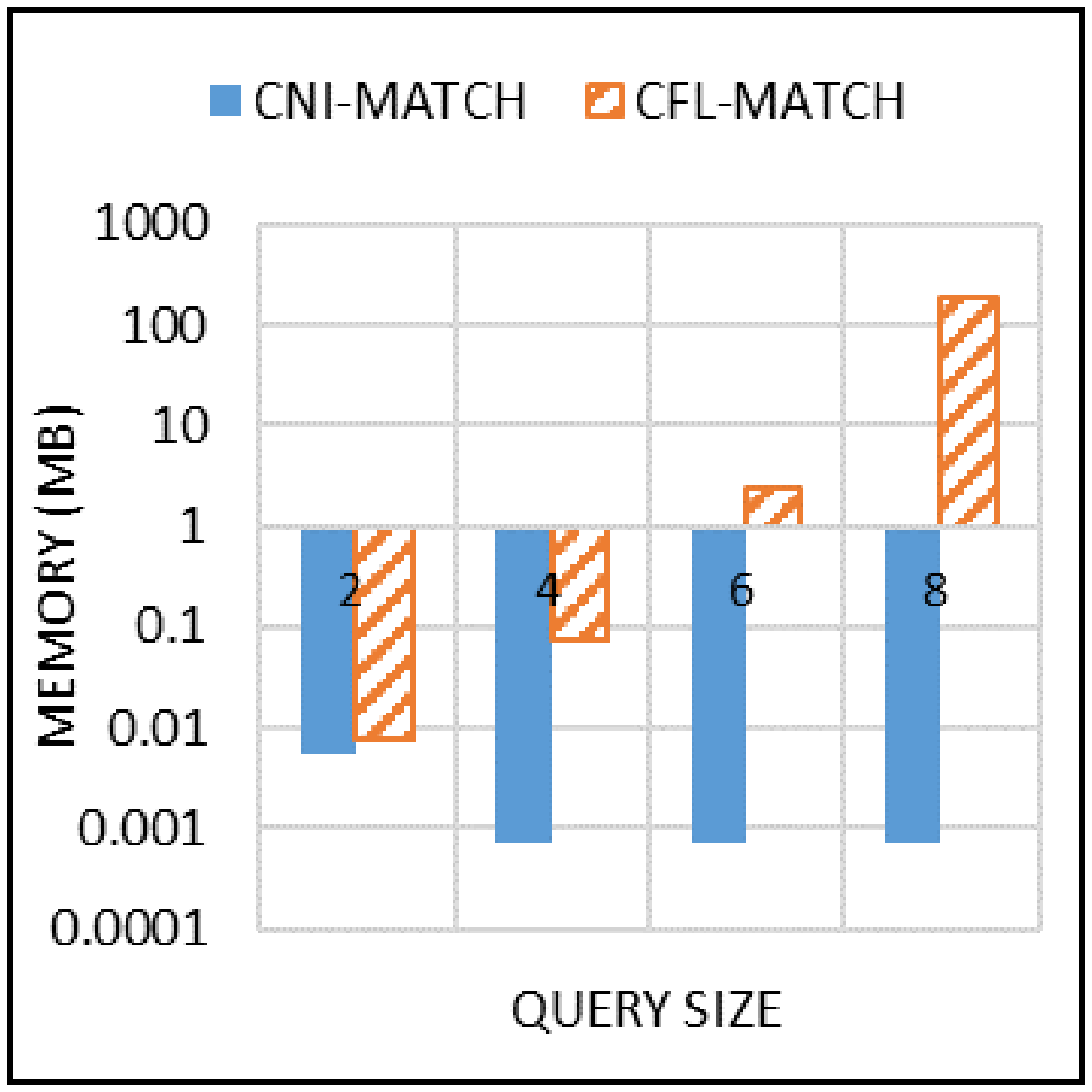}
\caption{Memory space for dense queries. \label{Fig-SmallDS4}}
(Results are in log scale)
\end{figure}
\begin{figure}[t]
\centering
\includegraphics[ scale=0.30]{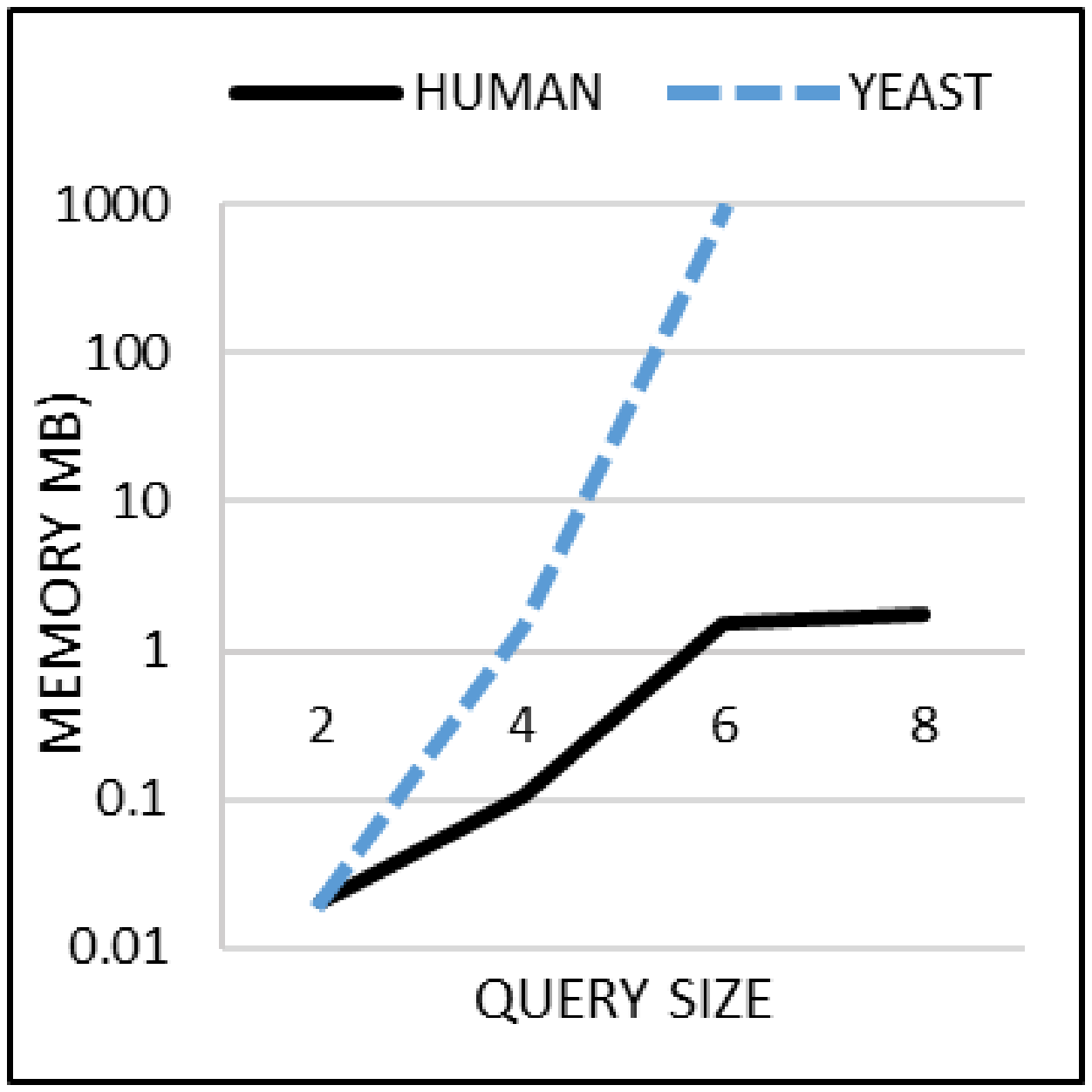}
\caption{Memory space for dense queries with CNI-Match. \label{Fig-SmallDS5}}
(Results are in log scale)
\end{figure}
To query the HUMAN, HPRD and YEAST datasets, we constructed a set of sparse and dense queries for each dataset. Each query is a connected subgraph of the data graph obtained using a random walk on the data
graph where the next vertex is selected according to the sparsity of the query. For a sparse query, the next vertex is selected among the neighbors that have the least number of neighbors. For a dense query, the next vertex is selected among the neighbors that have the greatest number of neighbors. For sparse queries, we provide $20$ query sets for each dataset, each containing
100 query graphs of the same size.
For dense queries, we provide $10$ query sets for each dataset, each containing
100 query graphs of the same size.

\subsection{Results}
In this subsection, we report and comment the results obtained by comparing our algorithm with the state of the art algorithm CFL-Match \cite{Bi2016}. Our main metric is the time performance by varying $|V(Q)|$, i.e., the number of vertices in the query, the density of the queries, and the amount of memory used by the algorithms. We present the obtained results according to these metrics. We note also that all the algorithms output the same sets of isomorphic subgraphs for each query graph.

Figure \ref{Fig-SmallDS} shows the average total processing time of the two algorithms on the three datasets when processing sparse queries. On the y-axis, INF means that the processing of the set of queries exceeded 12 hours execution time and has been aborted. According to this figure, CNI-Match is, compared to CFL-Match, on average 12 times faster on the YEAST dataset and 17 times faster on the HPRD dataset. For the dense and difficult dataset HUMAN, CNI-Match is 4 times faster than CFL-Match only by considering the query sizes for which CFL-Match has not reached the INF threshold.

Figure \ref{Fig-SmallDS2} shows the average total processing time of the two algorithms on the three datasets when processing dense queries. This figure shows clearly that CFL-Match is too slow for dense queries especially for the HUMAN dataset for which it has not obtained less than 12 hours even for the query size 2.

Figure \ref{Fig-SmallDS3} plots the memory space usage of the two algorithms when processing sparse queries for the 3 datasets. We can see in this figure that CNI-Match outperforms CFL-Match on the 3 datasets. CFL-Match which constructs the CPI structure for the data graph uses on average 675 times more space than CNI-Match for the YEAST dataset, and 2343 times more space for the HPRD dataset. For the HUMAN dataset, if we consider only the cases where CFL-Match finished within the INF threshold, it uses 4 times more space than CNI-Match. Note that these cases correspond to the sparse queries of size 1 to 14.

Figure \ref{Fig-SmallDS4} plots the memory space usage of the two algorithms when processing sparse queries for the HPRD dataset for which CFL-Match processed queries of size 1 to 8 within the INF threshold. CFL-Match space performance is very low as it uses on average 6000 times more space than CNI-Match.

Figure \ref{Fig-SmallDS5} shows how the amount of memory used by CNI-Match evolves when the query size increases for the HUMAN and the YEAST datasets. These amount of memory depends on the number of vertices of the data graphs that pass through the CNI filter and for which the algorithm stores the corresponding edges for adjacency testing.
\section{Conclusions}
\label{Sec-Conclusion}
Subgraph isomorphism search is an NP-complete problem. This means a processing time that grows with the size of the involved graphs. Pruning the search space is the pillar of a scalable subgraph isomorphism search algorithm and has been the main focus of proposed approaches since Ullmann's first solution. In this paper, we proposed \textit{CNI-Match}, a simple subgraph isomorphism search algorithm that relies on a compact representation of the neighborhood, called Compact Neighborhood Index (\textit{CNI}), to perform an early global pruning of the search space. \textit{CNI} distills topological information of each vertex into an integer. This vertex encoding is easily updatable and can be used to prune globally the search space using an iterative algorithm.  Our extensive experiments validate the efficiency of our approach.

As part of future work, it will be interesting to extend CNIs to construct a graph index that allows to handle a graph database.  For this issue, we propose to compute a vertex CNI that includes the vertex label:
$cni_d(u)=g_{2}(\ell(u), cni(u))$  where $\ell(u)$ is the label of vertex $u$ and $cni(u)$ its CNI at one hop. The CNI of the graph is given by:
$cni(G)=g_{|V(G)|}(cni_d(v_1),cni_d(v_2)+...+cni_d(v_{|V(G)|}))$ where  each $v_i$ is a vertex of $G$. This resulting graph CNI can be used to index a graph in a database of graphs defined on the same set of labels. This is not the sole manner to compute a CNI featuring a whole graph. So, it is interesting to study the soundness of such CNI in indexing by combining several approaches.
It is also interesting to note that computing \textit{CNI}s, does not require that the entire data graph is loaded into main memory. This is an interesting feature that can be used to deal with a graph stream. For this, we need to study the amount of edges that the algorithm needs to store according to the query size.

\bibliographystyle{abbrv}

\begin{thebibliography}{10}

\bibitem{Bachman1969}
C.~Bachman.
\newblock Data structure diagrams.
\newblock {\em DataBase: : A Quarterly Newsletter of SIGBDP}, 1(2), 1969.

\bibitem{Bi2016}
F.~Bi, L.~Chang, X.~Lin, L.~Qin, and W.~Zhang.
\newblock Efficient subgraph matching by postponing cartesian products.
\newblock In {\em Proceedings of the 2016 International Conference on
  Management of Data}, SIGMOD '16, pages 1199--1214, New York, NY, USA, 2016.
  ACM.

\bibitem{Bonnici2013}
V.~Bonnici, R.~Giugno, A.~Pulvirenti, D.~Shasha, and A.~Ferro.
\newblock A subgraph isomorphism algorithm and its application to biochemical
  data.
\newblock {\em BMC Bioinformatics}, 14(Suppl 7)(S13), 2013.

\bibitem{Cordella2004}
L.~P. Cordella, P.~Foggia, C.~Sansone, and M.~Vento.
\newblock {A (Sub)Graph Isomorphism Algorithm for Matching Large Graphs}.
\newblock {\em IEEE Transactions on Pattern Analysis and Machine Intelligence},
  26:1367--1372, 2004.

\bibitem{Gallagher2006}
B.~Gallagher.
\newblock Matching structure and semantics: A survey on graph-based pattern
  matching.
\newblock {\em AAAI FS}, 6:45--53, 2006.

\bibitem{Gallai1967}
T.~Gallai.
\newblock Transitiv orientierbare graphen.
\newblock {\em Acta Mathematica Hungarica}, 18:25--66, 1967.

\bibitem{habib2010}
M.~Habib and C.~Paul.
\newblock A survey of the algorithmic aspects of modular decomposition.
\newblock {\em Computer Science Review}, 4(1):41--59, 2010.

\bibitem{Han2013}
W.-S. Han, J.~Lee, and J.-H. Lee.
\newblock {Turboiso: Towards Ultrafast and Robust Subgraph Isomorphism Search
  in Large Graph Databases}.
\newblock In {\em Proceedings of the 2013 ACM SIGMOD International Conference
  on Management of Data}, SIGMOD '13, pages 337--348, New York, NY, USA, 2013.
  ACM.

\bibitem{He2006}
H.~He and A.~Singh.
\newblock Closure-tree: An index structure for graph queries.
\newblock In {\em Data Engineering, 2006. ICDE '06. Proceedings of the 22nd
  International Conference on}, pages 38--38, April 2006.

\bibitem{He2008}
H.~He and A.~K. Singh.
\newblock Graphs-at-a-time: Query language and access methods for graph
  databases.
\newblock In {\em Proceedings of the 2008 ACM SIGMOD International Conference
  on Management of Data}, SIGMOD '08, pages 405--418, New York, NY, USA, 2008.
  ACM.

\bibitem{Hopcroft2007}
J.~E. Hopcroft, R.~Motwani, and J.~D. Ullman.
\newblock {\em Introduction to Automata Theory, Languages, and Computation, 3rd
  Edition}.
\newblock Pearson, 2007.

\bibitem{Katsarou2017}
F.~Katsarou, N.~Ntarmoset, and P.~Triantafillou.
\newblock Subgraph querying with parallel use of query rewritings and
  alternative algorithms.
\newblock In {\em EDBT}, 2017.

\bibitem{Lagraa2014}
S.~Lagraa, H.~Seba, R.~Khennoufa, A.~M'Baya, and H.~Kheddouci.
\newblock A distance measure for large graphs based on prime graphs.
\newblock {\em Pattern Recognition}, 47(9):2993 -- 3005, 2014.

\bibitem{Lee2013}
J.~Lee, W.-S. Han, R.~Kasperovics, and J.-H. Lee.
\newblock An in-depth comparison of subgraph isomorphism algorithms in graph
  databases.
\newblock In {\em Proceedings of the 39th international conference on Very
  Large Data Bases}, PVLDB'13, pages 133--144. VLDB Endowment, 2013.

\bibitem{Lisi2007}
M.~Lisi.
\newblock Some remarks on the cantor pairing function.
\newblock {\em LE MATEMATICHE}, LXII-Fasc. I:55--65, 2007.

\bibitem{Nabti2017}
C.~Nabti and H.~Seba.
\newblock Querying massive graph data: A compress and search approach.
\newblock {\em Future Generation Computer Systems}, 74:63 -- 75, 2017.

\bibitem{Ren2015}
X.~Ren and J.~Wang.
\newblock Exploiting vertex relationships in speeding up subgraph isomorphism
  over large graphs.
\newblock {\em Proc. VLDB Endow.}, 8(5):617--628, Jan. 2015.

\bibitem{Fueter1923}
G.~P. Rudolf~Fueter.
\newblock Rationale abz\"{a}hlung der gitterpunkte, vierteljschr.
\newblock {\em Naturforsch. Ges, Z\"{u}rich}, 58:280–386, 1923.

\bibitem{Shang2008}
H.~Shang, Y.~Zhang, X.~Lin, and J.~X. Yu.
\newblock Taming verification hardness: An efficient algorithm for testing
  subgraph isomorphism.
\newblock {\em Proc. VLDB Endow.}, 1(1):364--375, Aug. 2008.

\bibitem{Shasha2002}
D.~Shasha, J.~T.~L. Wang, and R.~Giugno.
\newblock Algorithmics and applications of tree and graph searching.
\newblock In {\em Proceedings of the Twenty-first ACM SIGMOD-SIGACT-SIGART
  Symposium on Principles of Database Systems}, PODS '02, pages 39--52, New
  York, NY, USA, 2002. ACM.

\bibitem{Stein2013}
S.~K. Stein.
\newblock {\em Mathematics: The Man-Made Universe. New York: McGraw-Hill,
  1999}.
\newblock Dover Publications; 3rd Revised ed., March 21 2013.

\bibitem{Ullmann76}
J.~R. Ullmann.
\newblock {An Algorithm for Subgraph Isomorphism}.
\newblock {\em J. ACM}, 23(1):31--42, Jan. 1976.

\bibitem{Zhang2009}
S.~Zhang, S.~Li, and J.~Yang.
\newblock Gaddi: Distance index based subgraph matching in biological networks.
\newblock In {\em Proceedings of the 12th International Conference on Extending
  Database Technology: Advances in Database Technology}, EDBT '09, pages
  192--203, New York, NY, USA, 2009. ACM.

\bibitem{ZhaoP2010}
P.~Zhao and J.~Han.
\newblock On graph query optimization in large networks.
\newblock {\em {PVLDB}}, 3(1):340--351, 2010.

\bibitem{Zhao2012}
X.~Zhao, C.~Xiao, X.~Lin, and W.~Wang.
\newblock {Efficient Graph Similarity Joins with Edit Distance Constraints}.
\newblock In {\em IEEE 28th International Conference on Data Engineering (ICDE
  2012), Washington, DC, USA (Arlington, Virginia), 1-5 April}, pages 834--845,
  2012.

\bibitem{Zhu2012}
G.~Zhu, X.~Lin, K.~Zhu, W.~Zhang, and J.~X. Yu.
\newblock Treespan \: Efficiently computing similarity all-matching.
\newblock In {\em Proceedings of the 2012 ACM SIGMOD International Conference
  on Management of Data}, SIGMOD '12, pages 529--540, New York, NY, USA, 2012.
  ACM.

\end{thebibliography}

\appendix
\section{Proof of Theorem 1}
\label{app-prooftheo}
\begin{proof}
We need the following lemmas.
\begin{lem}\label{lemme1}
 $p<p' \Rightarrow \hbar(k,p)< \hbar(k,p')$
 \end{lem}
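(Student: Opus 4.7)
\medskip
\noindent\textbf{Proof proposal for Lemma \ref{lemme1}.}
The plan is to show that, for every fixed $k\ge 1$, the map $p\mapsto \hbar(k,p)$ is strictly increasing on its domain of nonnegative integers, and then obtain the strict inequality $\hbar(k,p)<\hbar(k,p')$ for $p<p'$ by iteration. Since by definition $\hbar(k,p)=\binom{p+k-1}{k}$, the natural tool is Pascal's identity applied in the second argument. I would therefore first rewrite $\hbar$ in binomial form to make the following computation transparent.

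The first key step is to compute the forward difference
\[
\hbar(k,p+1)-\hbar(k,p)=\binom{p+k}{k}-\binom{p+k-1}{k}=\binom{p+k-1}{k-1},
\]
where the last equality is Pascal's rule $\binom{n}{k}=\binom{n-1}{k}+\binom{n-1}{k-1}$ with $n=p+k$. Because $k\ge 1$ in every use made of $\hbar$ in the definition of $cni(\cdot)$ (the outer index $j$ ranges over $1,\dots,k$), and because $p\ge 0$ implies $p+k-1\ge k-1\ge 0$, the binomial coefficient $\binom{p+k-1}{k-1}$ is a positive integer. Hence the forward difference is strictly positive for every admissible $p$.

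The second step is to telescope. For any integers $0\le p<p'$,
\[
\hbar(k,p')-\hbar(k,p)=\sum_{j=p}^{p'-1}\bigl(\hbar(k,j+1)-\hbar(k,j)\bigr)=\sum_{j=p}^{p'-1}\binom{j+k-1}{k-1}>0,
\]
which yields the claimed strict inequality. Equivalently one can carry out a short induction on $p'-p\ge 1$, using the base case $\hbar(k,p+1)-\hbar(k,p)=\binom{p+k-1}{k-1}>0$ established above.

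There is no real obstacle: the only point that requires a word is the hypothesis $k\ge 1$, which excludes the degenerate case in which $\hbar(0,\cdot)$ would be constant. I would mention this explicitly so that the lemma is applied only with the range of $k$ that occurs in the CNI formula, namely $k\in\{1,\dots,|\mathcal L(Q)|\}$.
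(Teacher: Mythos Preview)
Your argument is correct and follows exactly the route the paper indicates: the paper's own proof is the one-line remark ``By deduction from the property of the binomial coefficient: $\binom{n}{k}=\binom{n-1}{k}+\binom{n-1}{k-1}$ (Pascal Formula)'', and you have simply written out this deduction in full, computing the forward difference via Pascal's rule and telescoping. Your added remark isolating the hypothesis $k\ge 1$ is a useful clarification not made explicit in the paper.
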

 \begin{proof}
 By deduction from the property of the binomial coefficient: $\binom{n}{k}=\binom{n-1}{k}+\binom{n-1}{k-1}$ (Pascal Formula)
\end{proof}
\begin{lem}\label{lemme2}
 $\forall k>0, g_{k}(x_{1},...,x_{k})<\hbar(k,x_{1}+...+x_{k}+1)$
  \end{lem}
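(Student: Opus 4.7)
The plan is to prove Lemma 2 by induction on $k$, using Pascal's identity as the main algebraic tool, together with the monotonicity already noted in Lemma 1.

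\textbf{Base case.} For $k=1$, I would simply unfold the definitions: $g_1(x_1)=\hbar(1,x_1)=\binom{x_1}{1}=x_1$, while $\hbar(1,x_1+1)=x_1+1$, so the inequality is immediate.

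\textbf{Inductive step.} Assume $g_{k-1}(x_1,\dots,x_{k-1})<\hbar(k-1,s'+1)$, where I write $s'=x_1+\dots+x_{k-1}$ and $s=s'+x_k$. From the definition of $g_k$, I split off the last summand:
\[
g_k(x_1,\dots,x_k)=g_{k-1}(x_1,\dots,x_{k-1})+\hbar(k,s).
\]
Applying the induction hypothesis gives $g_k<\hbar(k-1,s'+1)+\hbar(k,s)$. Since $s'\le s$, Lemma 1 yields $\hbar(k-1,s'+1)\le \hbar(k-1,s+1)$, so
\[
g_k<\hbar(k-1,s+1)+\hbar(k,s).
\]
At this point the key step is Pascal's identity applied to the binomial coefficients hidden in $\hbar$: writing everything out,
\[
\hbar(k-1,s+1)+\hbar(k,s)=\binom{s+k-1}{k-1}+\binom{s+k-1}{k}=\binom{s+k}{k}=\hbar(k,s+1),
\]
which completes the induction.

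\textbf{Where the difficulty lies.} The whole argument is short once the right decomposition is chosen; the only subtlety is keeping track of the strict versus non-strict inequalities so that the strict inequality from the induction hypothesis survives the monotonicity step and the Pascal collapse. I would carry the strict inequality on $g_{k-1}$ through the first step and only use a non-strict bound for the monotonicity of $\hbar$ in its second argument, which is enough to preserve strictness in the final line.
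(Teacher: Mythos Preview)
Your proof is correct and follows essentially the same route as the paper: induction on $k$, splitting off the last term of $g_k$, applying the induction hypothesis, then using monotonicity of $\hbar$ in its second argument together with Pascal's identity to collapse the two terms into $\hbar(k,s+1)$. Your handling of the strict versus non-strict inequality in the monotonicity step is in fact more careful than the paper's, which writes a strict inequality at that point even though $x_k$ may be zero.
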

 \begin{proof}
 This inequality is trivial for $k=1$: $g_1(x_1)=x_1$ and $\hbar(1,x_1+1)=x_1+1$.
 Assume that, for $k\geq1$, the inequality holds and let us prove that it also holds for $k+1$.\\
 By definition of $g_k$, we have:\\
 $g_{k+1}(x_{1},...,x_{k+1})=g_{k}(x_{1},...,x_{k})+\hbar(k+1,x_{1}+...+x_{k+1})$
 $\qquad{}\qquad{}\qquad{}\qquad{}<\hbar(k,x_{1}+...+x_{k}+1)+\hbar(k+1,x_{1}+...+x_{k+1})$
 $\qquad{}\qquad{}\qquad{}\qquad{}<\hbar(k,x_{1}+...+x_k+x_{k+1}+1)+\hbar(k+1,x_{1}+...+x_{k+1})$

By the property of Pascal's triangle, we know that:\\ $\hbar(k,x_{1}+...+x_k+x_{k+1}+1)+\hbar(k+1,x_{1}+...+x_{k+1})=\hbar(k+1,x_{1}+...+x_{k+1}+1)$,
we have $g_{k+1}(x_{1},...,x_{k+1})<\hbar(k+1,x_{1}+...+x_{k+1}+1)$

\end{proof}

\begin{lem}\label{lemme3}
 $\forall k>0$, If $g_{k}(x_{1},...,x_{k})= g_{k}(x'_{1},...,x'_{k})$ then $ x_{1}+...+x_{k}= x'_{1}+...+x'_{k}$
  \end{lem}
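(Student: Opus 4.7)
The plan is to argue by contradiction, leveraging the two preceding lemmas to turn an assumed inequality of sums into a strict inequality of $g_k$ values.

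Assume $g_{k}(x_{1},\dots,x_{k}) = g_{k}(x'_{1},\dots,x'_{k})$ but $x_{1}+\cdots+x_{k} \neq x'_{1}+\cdots+x'_{k}$. Without loss of generality, set $S = x_{1}+\cdots+x_{k}$ and $S' = x'_{1}+\cdots+x'_{k}$ with $S < S'$, so $S+1 \leq S'$. The strategy is to squeeze $g_k$ at both tuples against the single quantity $\hbar(k,S+1)$: bound the left-hand tuple strictly above by it, and the right-hand tuple weakly below by it, yielding a strict inequality that contradicts the equality hypothesis.

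For the upper bound, Lemma \ref{lemme2} gives directly $g_{k}(x_{1},\dots,x_{k}) < \hbar(k, S+1)$. For the lower bound, observe that $g_{k}(x'_{1},\dots,x'_{k}) = \sum_{j=1}^{k} \hbar(j, x'_{1}+\cdots+x'_{j})$ is a sum of non-negative terms (each $\hbar(p,s)$ is a binomial coefficient, hence in $\mathbb{N}$), so it is at least its last term: $g_{k}(x'_{1},\dots,x'_{k}) \geq \hbar(k, S')$. Applying Lemma \ref{lemme1} with $S+1 \leq S'$ yields $\hbar(k, S+1) \leq \hbar(k, S')$, and chaining gives $g_{k}(x'_{1},\dots,x'_{k}) \geq \hbar(k, S+1) > g_{k}(x_{1},\dots,x_{k})$, contradicting the assumed equality.

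The argument is short and I do not expect any real obstacle once Lemmas \ref{lemme1} and \ref{lemme2} are in hand: the only delicate point is justifying that each summand in $g_k$ is non-negative (needed to drop all but the last term in the lower bound), which is immediate since $\hbar(p,s) = \binom{s+p-1}{p}$ is a binomial coefficient with $s \geq 0$ and $p \geq 1$. The WLOG step is legitimate because the statement is symmetric in the two tuples.
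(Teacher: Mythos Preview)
Your proposal is correct and follows essentially the same approach as the paper: both arguments sandwich $g_k$ between $\hbar(k,S)$ and $\hbar(k,S+1)$ (the lower bound coming from non-negativity of the summands, the upper from Lemma~\ref{lemme2}) and then invoke the monotonicity of $\hbar(k,\cdot)$ from Lemma~\ref{lemme1}. The only cosmetic difference is that you phrase it as a single contradiction after a WLOG reduction, whereas the paper derives the two inequalities $S\leq S'$ and $S'\leq S$ separately; your write-up is in fact the cleaner of the two.
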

 \begin{proof}
  Assume that $g_{k}(x_{1},...,x_{k})= g_{k}(x'_{1},...,x'_{k})$. \\
  According to Lemma \ref{lemme2}, we have:\\

  $ \hbar(k,x_{1}+...+x_{k})<g_{k}(x_{1},...,x_{k})= g_{k}(x'_{1},...,x'_{k})<\hbar(k,x_{1}+...+x_{k}+1$

  we obtain then: $ \hbar(k,x_{1}+...+x_{k})<\hbar(k,x_{1}+...+x_{k}+1$
  According to Lemma \ref{lemme1}, $\hbar(k,p)$ is strictly increasing. So, the inequality $x_{1}+...+x_{k}\leq x'_{1}+...+x'_{k}$ holds.
  Similarly, we prove the inverse inequality.  This proves that $x_{1}+...+x_{k}= x'_{1}+...+x'_{k}$.

\end{proof}

To prove Theorem \ref{Theo-bijection}, we first prove that $g_k$ is  injective from $\mathds{N}^k$ to $\mathds{N}$. It is trivial for $k=1$. In fact, $g_1=\hbar(1,x_1)=\binom{x_1}{1}=\frac{x_1!}{1!(x_1-1)!}=x_1$ is the identity in $\mathds{N}$.
For $k\geq2$, we assume that $g_{k-1}$ is injective and we prove that $g_{k}$ is also injective.
Let $(x_{1},....,x_{k})$ and $(x'_{1},....,x'_{k})$ such that $g_{k}(x_{1},....,x_{k})=g_{k}(x'_{1},....,x'_{k})$.
According to Lemma \ref{lemme3}, $x_{1}+...+x_{k}= x'_{1}+...+x'_{k}$. We have also by definition of $g_k$:\\
$$
\left\{ \begin{array}{l}
g_{k}(x_{1},...,x_{k})=g_{k-1}(x_{1},...,x_{k-1})+\hbar(k,x_{1}+...+x_{k}) \\
g_{k}(x'_{1},...,x'_{k})=g_{k-1}(x'_{1},...,x'_{k-1})+\hbar(k',x_{1}+...+x'_{k})\\
\end{array}
\right.
$$
By subtracting side by side, we obtain $g_{k-1}(x_{1},...,x_{k-1})=g_{k-1}(x'_{1},$ $...,x'_{k-1})$ which is our induction hypothesis that gives $(x_{1},...,x_{k-1})=(x'_{1},...,x'_{k-1})$. This implies that $x_k=x'_k$.

Conclusion: $g_{k}$ is injective.

To show that $g_k$ is also surjective, we recall that
$\hbar(k,x_{1}+...+x_{k})\leq g_{k}(x_{1},...,x_{k})<\hbar(k,x_{1}+...+x_{k}+1)$.
As  $p\rightarrow \hbar(k,p)$ is a strictly increasing sequence, we deduce that
each $n$ in $\mathds{N}$ have an antecedent in $\mathds{N}^{k}$.\\
So, $g_{k}$ is a bijection from $\mathds{N}^{k}$ to $\mathds{N}$ which proves Theorem \ref{Theo-bijection}.

\end{proof}
\section{Proof Sketch of Lemma \ref{lem-cni}}
\label{app-prooflemcni}
We prove the lemma by contradiction. Assume $v$ is a candidate of $u$ with $cni(v) < cni(u)$. That is, there is an embedding $M$ that maps $u$ to $v$. This means that $\ell(v)=\ell(u)$ and $deg(v)\geq deg(u)$ and $\ell(N(u))\subseteq \ell(N(v))$.  Let $deg(u)=k$ and $deg(v)=k+t$, $t\geq1$. Let $(l_1, l_2, \cdots, l_k)$ be the labels of the neighbors of $u$ according to the order given by function $ord()$. Similarly, let $(l_1, l_2, \cdots, l_k, l_{k+1}, \cdots, l_{k+t})$ be the labels of the neighbors of $v$. By construction of, we have
$cni(v)=g_{k+t}(l_1, l_2, \cdots, l_{k+t})$=$g_k(l_1, l_2,$ $\cdots, l_{k})$+$\hbar(k+1,l_{1}+...+l_{k+1})$+$\cdots$+$\hbar(k+t,l_{1}+$ $...+l_{k+t})$.
So, $cni(v)=cni(u)$+$\hbar(k+1,l_{1}+...+l_{k+1})$+$\cdots$+$\hbar(k+t,l_{1}+...+l_{k+t})$.
as $t>0$, we reach a contradiction. Thus, the lemma holds.

\end{document}